\newcommand{\rrvert}{\vert}
\newcommand{\llvert}{\vert}
\def\prettyref{\eqref}
\newcommand{\eqref}[1]{(\ref{#1})}
\newtheorem{thmm}{Theorem}
\newtheorem{prop}[thmm]{Proposition}
\newtheorem{lem}[thmm]{Lemma}
\renewcommand{\P}{\mathbb{P}} % probability
\newcommand{\E}{\mathbb{E}} % expectation
\newcommand{\setX}{\mathcal{X}}
\newcommand{\setY}{\mathcal{Y}}
\newcommand{\C}{\mathcal{C}}
\newcommand{\R}{\varrho} %resampling dist's
\newcommand{\ev}{\mathcal{A}} % event used in technical lemma; e.g A
\begin{document}
\begin{frontmatter}

\title{On particle Gibbs sampling}
\runtitle{On particle Gibbs sampling}

\begin{aug}
%%%% inicialai - be tarpu
\author[a]{\inits{N.}\fnms{Nicolas}~\snm{Chopin}\corref{}\thanksref{a}\ead[label=e1]{nicolas.chopin@ensae.fr}}
\and
\author[b]{\inits{S.S.}\fnms{Sumeetpal S.}~\snm{Singh}\thanksref{b}\ead[label=e2]{sss40@eng.cam.ac.uk}}
\address[a]{CREST-ENSAE and HEC Paris, 3 Avenue Pierre Larousse, 92235 Malakoff,
France.\\ \printead{e1}}
\address[b]{Department of Engineering, University of Cambridge,
Trumpington Street,
Cambridge, CB2 1PZ,
UK. \printead{e2}}
%\author{\inits{}\fnms{}~\snm{}\corref{}\thanksref{1}\ead[label=e1]{}}%
%\and
%\author{\inits{}\fnms{}~\snm{}\thanksref{}\ead[label=e2]{}}
%\author{\inits{}\fnms{}~\snm{}}
%%\runauthor{} %% auto
%\dedicated{}
%\address[1]{\printead{e1}}
%\address[]{}
\end{aug}

% HISTORY:
\received{\smonth{1} \syear{2014}}
%\revised{\smonth{} \syear{}}

% ABSTRACT

\begin{abstract}
The particle Gibbs sampler is a Markov chain Monte Carlo (MCMC) algorithm
to sample from the full posterior distribution of a state-space model.
It does so by executing Gibbs sampling steps on an extended target
distribution defined on the space of the auxiliary variables generated
by an interacting particle system. This paper makes the following
contributions to the theoretical study of this algorithm.
Firstly,
we present a coupling construction between two particle Gibbs updates
from different starting points and we show that the coupling probability
may be made arbitrarily close to one by increasing the number of particles.
We obtain as a direct corollary that the particle Gibbs kernel is
uniformly ergodic.
Secondly, we show how the inclusion
of an additional Gibbs sampling step that reselects the ancestors
of the particle Gibbs' extended target distribution, which is a popular
approach in practice to improve mixing, does indeed yield a theoretically
more efficient algorithm as measured by the asymptotic variance.
Thirdly, we extend particle Gibbs to work with
lower variance resampling schemes.
A detailed numerical study is provided to demonstrate the efficiency
of particle Gibbs and the proposed variants.
\end{abstract}

% KEYWORDS
% visi is mazosios raides ir pagal abecele
%
\begin{keyword}
\kwd{Feynman--Kac formulae}
\kwd{Gibbs sampling}
\kwd{particle filtering}
\kwd{particle Markov chain Monte Carlo}
\kwd{sequential Monte Carlo}
\end{keyword}
\end{frontmatter}

%s1 #&#
\section{Introduction}

PMCMC (particle Markov chain Monte Carlo \cite{PMCMC}) is a new
set of MCMC algorithms devised for inference in state-space models
which has attracted considerable attention in statistics. It has in
a short time triggered intense scientific activity spanning methodological
\cite{Silva2009,WhiteleyChange2010,smc2,LindstenAncestor} and applied
work, the latter in domains as diverse as ecology \cite{Peters2010a},
electricity forecasting \cite{launay2012particle}, finance \cite{Pitt2012134},
systems biology \cite{golightly2011bayesian}, social networks \cite{Everitt2012}
and hydrology \cite{Vrugt2012}. One appeal of PMCMC is that it makes
it possible to perform ``plug-and-play'' inference for complex hidden
Markov models, that is, the only requirement is that one needs to
be able to sample from the Markov transition of the hidden chain,
which is in most cases non-demanding, in contrast to previous
approaches based on standard MCMC.

Each PMCMC step generates an interacting particle system; see \cite{DouFreiGor,delMoral:book} and \cite{CapMouRyd} for general references
on particle algorithms (also known as Sequential Monte Carlo algorithms).
Several instances of PMCMC may be analysed as \textit{exact Monte
Carlo approximations} of an ideal algorithm, that is, as a noisy version
of an ideal algorithm where some intractable quantity is replaced
by an unbiased Monte Carlo estimate (computed from the interacting
particle system). Such algorithms are analysed in detail in \cite{AndrieuVihola}.
The term `exact' in the phrase `exact Monte Carlo' highlights the
fact that, despite being an approximation of an ideal algorithm, PMCMC
samples exactly from the distribution of interest.

However, this interpretation does not seem applicable to variants of
PMCMC involving a particle Gibbs step. While particle Gibbs also generates
a complete interacting particle system at each iteration, it does
so conditionally on the trajectory for one particle being fixed, and
it does not replace an intractable quantity of an ideal algorithm
with an unbiased estimator.

The objective of this paper is to undertake a theoretical study of
particle Gibbs to try to support its very favourable performance observed
in practice. For this, we design a coupling construction between two particle
Gibbs updates that start from different trajectories and establish
that the coupling probability may be made arbitrarily large by increasing
the number of particles $N$. As a direct corollary, we conclude that
the transition kernel of particle Gibbs is uniformly ergodic (under
suitable conditions). This strong result supports why particle Gibbs
can be expected, and does indeed, perform so well in practice. Our
coupling construction is maximal for some special cases and appears
unique in the literature on particle systems.

Secondly, we show how the inclusion of an additional backward sampling
step that reselects the ancestors of the particle Gibbs' extended
target distribution, first proposed by \cite{Whiteley_disc_PMCMC}
and now a popular approach in practice to improve mixing \cite{Lindsten2012},
does indeed yield a theoretically more efficient algorithm as measured
by the asymptotic variance of the central limit theorem. Thirdly,
and as another way to enhance mixing, we extend the original particle
Gibbs sampler (which is based on the multinomial resampling scheme
as presented in the original paper of \cite{PMCMC}) to work with
lower variance residual or systematic resampling schemes.
This variety of implementation of particle Gibbs raises an obvious
question: which variant performs best in practice? We present numerical
comparisons in a particular example, which suggests that the backward
sampling strongly improves the mixing of particle Gibbs, and, when
it cannot be implemented, then residual and systematic resampling
leads to significantly better mixing than multinomial resampling.

The plan of the paper is the following. Section~\ref{sec:PG_framework}
sets up the notation and defines the particle Gibbs algorithm.
This section reviews the original particle Gibbs algorithm of \cite{PMCMC}
and presents a reinterpretation of particle Gibbs as a
Markov kernel to facilitate the analysis to follow in the later sections.
Some supporting technical results are also presented. Section~\ref{sec:Main-result}
proves that the particle Gibbs kernel is uniformly ergodic. To that effect,
a coupling construction is obtained such that the coupling probability
between two particle Gibbs updates may be made arbitrarily large for $N$
large enough.
Section~\ref{sec:Backward-sampling} discusses the backward sampling step
proposed by \cite{Whiteley_disc_PMCMC}, and establishes dominance of particle
Gibbs with this backward sampling step over the version without.
Section~\ref{sec:alt-resampling} discusses how to extend particle
Gibbs to alternative resampling schemes.
Section~\ref{sec:Numerical-experiments} presents a numerical
comparison of
the variants of particle Gibbs discussed in the previous sections.
Section~\ref{sec:Conclusion} concludes.

%s2 #&#
\section{Definition of the particle Gibbs sampler}
\label{sec:PG_framework}

%The aim of this section is to define the particle Gibbs sampler
%introduced
%by \cite{PMCMC}.

%s2.1 #&#
\subsection{Notation}
\label{sec:Notations}

For $m\leq n$, we denote by $m\dvtx n$ the range of integers $\{m,\ldots,n\}$,
and we use extensively the semicolon short-hand for collections of
random variables, for example, $X_{0:T}=(X_{0},\ldots, X_{T})$,
$X_{t}^{1:N}=(X_{t}^{1},\ldots,X_{t}^{N})$,
and even in an nested form, $X_{0:T}^{1:N}=(X_{0}^{1:N},\ldots,X_{T}^{1:N})$;
more generally $X_{t}^{v}$, where $v$ is a vector in $\mathbb{N}^{+}$
will refer to the collection $(X_{t}^{n})_{n\in v}$. These short-hands
are also used for realisations of these random variables, which are
in lower case, for example, $x_{0:t}$ or $x_{t}^{1:N}$. The sub-vector
containing
the $t$ first components of some vector $Z_{T}$ is denoted by $
[Z_{T} ]_{t}$.

For a vector $r^{1:N}$ of probabilities, $r^{n}\in[0,1]$ and $\sum_{n=1}^{N}r^{n}=1$,
we denote by $\mathcal{M}(r^{1:N})$ the multinomial distribution
which produces outcome $n$ with probability $r^{n}$, $n\in1\dvtx N$.
For reals $x,y$, let $x\vee y=\operatorname{max}(x,y)$ and $x\wedge
y=\operatorname{min}(x,y)$.
The integer part of $x$ is $ \lfloor x \rfloor$, and the
positive part is $x^{+}=x\vee0$. The cardinal of a finite set
$\mathcal{C}$
is denoted as $\llvert \mathcal{C}\rrvert $.

For a complete separable metric space $\setX$, we denote by $\mathcal
{P}(\setX)$ the set of probability distributions on $\setX$. For a probability
measure $\mu\in\mathcal{P}(\setX$), a kernel $K\dvtx\setX\rightarrow
\mathcal
{P}(\setX)$
and a measurable function $f$ defined on $\setX$, we use the following
standard notation: $\mu(f)=\int_{\setX}\,\mathrm{d}\mu f$, $Kf$ is the application
$x\rightarrow\int_{\setX}K(x,\mathrm{d}x')f(x')$, and $\mu K$ is the probability
measure $(\mu K)(A)=\int_{\setX}\mu(\mathrm{d}x)K(x,A)$. The atomic measure
at $a\in\setX$ is denoted by $\delta_{a}(\mathrm{d}x)$. We denote by $\mu
\otimes K$
the measure $\mu(\mathrm{d}x)K(x,\mathrm{d}x')$ on the product space $\setX\times\setX$.
Finally, we shall often use the same symbol for distributions and
densities; for example, $m_0(\mathrm{d}x_0)=m_0(x_0) \,\mathrm{d}x_0$ means that the
distribution $m_0(\mathrm{d}x_0)$ admits the function $x_0\rightarrow m_0(x_0)$
as a probability density relative to some sigma-finite dominating
measure $\mathrm{d}x_0$.

%s2.2 #&#
\subsection{The target distribution}
\label{sec:Model}

Let $\setX$ be a complete separable metric space, and
$(X_{t})_{t\geq0}$ a discrete-time $\setX$-valued Markov
chain, with initial law $m_{0}(\mathrm{d}x_{0})=m_{0}(x_{0}) \,\mathrm{d}x_{0}$, and
transition law $m_{t}(x_{t-1},\mathrm{d}x_{t})=m_{t}(x_{t-1},x_{t}) \,\mathrm{d}x_{t}$,
where $\mathrm{d}x_0$, $\mathrm{d}x_t$ are appropriately chosen (possibly identical)
sigma-finite dominating measures.
Let $(G_{t})_{t\geq0}$ be a sequence of $\setX\rightarrow\mathbb
{R}^+$ potential
functions. In the context of hidden Markov models, typically
$G_{t}(x_{t})=g(x_{t},y_{t})$,
the density (with respect to some dominating measure $\mathrm{d}y$) of observation
$y_{t}$ of the $\setY$-valued random variable $Y_{t}$, conditional
on state $X_{t}=x_{t}$.

It is convenient to work directly with the path model,
that is, we define $Z_{t}=X_{0:t}$ (and $z_{t}=x_{0:t}$) taking values
in $\setX^{t+1}$, and slightly abusing notation, we extend the domain
of $G_{t}$ from $\mathcal{X}$ to $\mathcal{X}^{t+1}$ as follows:
$G_{t}(z_{t})=G_{t}(x_{t})$. The $Z_{t}$'s form a time inhomogeneous
Markov kernel, with initial law $q_{0}(\mathrm{d}z_{0})=m_{0}(\mathrm{d}x_{0})$, and
transition
\[
q_{t}\bigl(z_{t-1},\mathrm{d}z_{t}^{\prime}\bigr)=
\delta_{z_{t-1}}\bigl(\mathrm{d}x_{0:t-1}^{\prime}\bigr)
m_{t}\bigl(x_{t-1},x_{t}^{\prime}
\bigr)\,\mathrm{d}x_{t}^{\prime}
\]
that is, keep all of $z_{t-1}$ and append new state $x_{t}$, from Markov
transition $m_{t}(x_{t-1},\mathrm{d}x_{t})$. The associated (Feynman--Kac)
path measures are
%
%e1 #&#
\begin{equation}
\mathbb{Q}_t(\mathrm{d}z_{t})=\mathbb{Q}_t(\mathrm{d}x_{0:t})=
\frac{1}{\mathcal
{Z}_t}G_0(x_0)m_{0}(\mathrm{d}x_{0})
\prod_{s=1}^{t} \bigl\{
G_{s}(x_{s})m_{s}(x_{s-1},\mathrm{d}x_{s})
\bigr\}, \label
{eq:invariant_dist}
\end{equation}
where $\mathcal{Z}_t$ is defined as
\[
\mathcal{Z}_t=\int_{\setX^{t+1}}G_0(x_0)m_{0}(\mathrm{d}x_{0})
\prod_{s=1}^{t} \bigl\{
G_{s}(x_{s})m_{s}(x_{s-1},\mathrm{d}x_{s})
\bigr\}
\]
assuming from now on that $0<\mathcal{Z}_t<+\infty$.
The target distribution to be sampled from is $\mathbb{Q}_T(\mathrm{d}z_{T})$ for
some fixed $T$, which can also be interpreted as the full posterior of
a state-space model.

The fact that we work directly with the path $Z_{t}$, and path-valued
potential functions $G_{t}(z_{t})$, reveals that our results could
be extended easily to the situation where in the original formulation
for $X_{t}$, the potential function depended on past values, for example,
$G_{t}(x_{t-1},x_{t})$. In that way, one may consider, for instance, more
general algorithms
where particles are mutated according to a proposal kernel that may differ
from the Markov kernel of the considered model.
However, in the only part of the paper (Section~\ref{sec:Backward-sampling}) where we shall revert to the original
formulation based on $X_{t}$, we will stick to the standard case
where $G_{t}$ depends only on $x_{t}$ for the sake of clarity.

Andrieu \textit{et al.} \cite{PMCMC} introduced an MCMC algorithm that samples from
(\ref{eq:invariant_dist})
by defining an extended target distribution (which admits (\ref
{eq:invariant_dist})
as its marginal) and then constructing a Gibbs sampler for this extended
target. In the next section, we review this construction of theirs.

%s2.3 #&#
\subsection{The extended target and the particle Gibbs sampler}
\label{sec:Probabilistic-description-of}

The starting point in the definition of \cite{PMCMC}'s extended
target distribution that admits \eqref{eq:invariant_dist} as its
marginal is the joint distribution of all the random variables generated
in the course of the execution of an (interacting) particle algorithm
that targets the path measures given in \eqref{eq:invariant_dist}.
We refer the reader to \cite{delMoral:book,CapMouRyd} for a review
of particle algorithms that target Feynman--Kac path measures.

The particle representation $\mathbb{Q}_t^{N}(\mathrm{d}z_{t})$ is the
empirical measure
defined as, for $t\geq0$,
\[
\mathbb{Q}_t^{N}(\mathrm{d}z_{t})=\frac{1}{N}
\sum_{n=1}^{N}\delta_{Z_{t}^{n}}(\mathrm{d}z_{t}),
\]
where the particles $Z_{t}^{1:N}=(Z_{t}^{1},\ldots,Z_{t}^{N})$ are
defined recursively as follows. First, $Z_{0}^{1:N}$ is obtained
by sampling $N$ times independently from $m_{0}(x_{0})\,\mathrm{d}x_{0}$. To
progress from time $t$ to time $t+1$, $t\geq0$, the pair
$(A_{t}^{1:N},Z_{t+1}^{1:N})$
is generated jointly from
\[
\varrho_t\bigl(z_{t}^{1:N},\mathrm{d}a_{t}^{1:N}
\bigr)\prod_{n=1}^{N}q_{t+1}
\bigl(z_{t}^{a_{t}^{n}},\mathrm{d}z_{t+1}^{n}\bigr),
\]
conditionally\vspace*{1.5pt} on $Z_{t}^{1:N}=z_{t}^{1:N}$, where the $A_{t}^{n}$'s,
$n\in1\dvtx N$, jointly sampled from the resampling distribution $\varrho_t
(z_{t}^{1:N},\mathrm{d}A_{t}^{1:N})$
are the ancestor variables, that is, $A_{t}^{n}$ is the label of the
particle at time $t$ which generated particle $Z_{t+1}^{n}$ at time
$t+1$. (Since the $A_t^n$'s are integer-valued, the dominating measure
of kernel $\varrho_t(z_t^{1:N},\mathrm{d}a_t^{1:N})$ is simply the counting measure.)

The law of the collection of random variables $(Z_{0:T}^{1:N},A_{0:T-1}^{1:N})$
generated from time 0 to some final time $T\geq1$ is therefore
\[
\vartheta_{T}^{N} \bigl(\mathrm{d}z_{0:T}^{1:N},\mathrm{d}a_{0:T-1}^{1:N}
\bigr)=m_{0}^{\otimes N}\bigl(\mathrm{d}z_{0}^{1:N}
\bigr)\prod_{t=1}^{T} \Biggl\{ \R
_{t-1}\bigl(z_{t-1}^{1:N},\mathrm{d}a_{t-1}^{1:N}
\bigr)\prod_{n=1}^{N} \bigl[q_{t}
\bigl(z_{t-1}^{a_{t-1}^{n}},\mathrm{d}z_{t}^{n}\bigr)
\bigr] \Biggr\}.
\]

The simplest choice for $\varrho_t$ is what is usually referred to as
the multinomial
resampling scheme, namely the $A_{t}^{n}$'s are drawn independently
from the multinomial distribution $\mathcal{M}
(W_{t}^{1:N}(z_{t}^{1:N}) )$,
where the $W_{t}^{n}$'s are the normalised weights
%
%e2 #&#
\begin{equation}
W_{t}^{n}\bigl(z_{t}^{1:N}\bigr)
\stackrel{\Delta} {=}\frac
{G_{t}(z_{t}^{n})}{\sum_{m=1}^{N}G_{t}(z_{t}^{m})}, \qquad
n\in1\dvtx N.\label{eq:def_weights}
\end{equation}
Then
$\varrho_t(z_{t}^{1:N},\mathrm{d}a_{t}^{1:N})$,
$t\geq0$ equals
%
%e3 #&#
\begin{equation}
\varrho_t\bigl(z_{t}^{1:N},\mathrm{d}a_{t}^{1:N}
\bigr)= \Biggl\{\prod_{n=1}^{N}W_{t}^{a_{t}^{\mathrm{n}}}
\bigl(z_{t}^{1:N}\bigr) \Biggr\} \,\mathrm{d}a_t^{1:N}.
\label{eq:multiRS}
\end{equation}

For now, we assume this particular choice for $\varrho_t$, and our main
results will therefore be specific to multinomial resampling. Note,
however, that we will discuss alternative
resampling schemes at the end of the paper; see Section~\ref{sec:alt-resampling}.

We now state an intermediate result which is needed to ensure the
validity of the extended target \eqref{eq:defpi} below.

%pr1 #&#
\begin{prop} One has
%
%e4 #&#
\begin{equation}
\E_{\vartheta_{T}^{N}} \Biggl[\prod_{t=0}^{T}
\Biggl\{ \frac{1}{N}\sum_{n=1}^{N}G_{t}
\bigl(Z_{t}^{n}\bigr) \Biggr\} \Biggr]=\mathcal{Z}_T.\label{eq:unbiased}
\end{equation}
\end{prop}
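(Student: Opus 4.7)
The statement is the classical unbiasedness of the particle-filter estimator of the normalising constant, and I would prove it by induction on $T$. The base case is immediate: under $\vartheta_0^N$ the particles $Z_0^{1:N}$ are i.i.d.\ from $m_0$, so $\mathbb{E}[\frac{1}{N}\sum_n G_0(Z_0^n)] = \int G_0(x_0)\,m_0(dx_0)$, which matches the relevant normalising constant. The only ingredient from the hypotheses that is needed is the \emph{marginal} unbiasedness in Assumption (Ra); (Rb) plays no role here.

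For the inductive step, decompose $\hat{\mathcal{Z}}_T^N := \prod_{t=0}^T \tfrac{1}{N}\sum_n G_t(Z_t^n) = \hat{\mathcal{Z}}_{T-1}^N \cdot \tfrac{1}{N}\sum_n G_T(Z_T^n)$, and let $\mathcal{F}_{T-1} = \sigma(Z_{0:T-1}^{1:N}, A_{0:T-2}^{1:N})$, so that $\hat{\mathcal{Z}}_{T-1}^N$ is $\mathcal{F}_{T-1}$-measurable. Conditional on $\mathcal{F}_{T-1}$, each $Z_T^n$ is produced by first drawing an ancestor $A_{T-1}^n$, marginally $\mathcal{M}(W_{T-1}^{1:N})$ by (Ra), and then appending $X_T^n \sim m_T(X_{T-1}^{A_{T-1}^n},\cdot)$. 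Hence
\[
\mathbb{E}\!\left[\tfrac{1}{N}\sum_n G_T(Z_T^n)\,\Big|\,\mathcal{F}_{T-1}\right] = \sum_m W_{T-1}^m\,\widetilde{G}_T(X_{T-1}^m) = \frac{\sum_m G_{T-1}(Z_{T-1}^m)\,\widetilde{G}_T(X_{T-1}^m)}{\sum_j G_{T-1}(Z_{T-1}^j)},
\]
with $\widetilde{G}_T(x):=\int G_T(x')\,m_T(x,dx')$. The denominator cancels against the factor $\tfrac{1}{N}\sum_j G_{T-1}(Z_{T-1}^j)$ inside $\hat{\mathcal{Z}}_{T-1}^N$, reducing the problem to a particle-filter estimator at time $T-1$ in which $G_{T-1}$ has been replaced by the modified potential $\widehat{G}_{T-1}(z):=G_{T-1}(z)\,\widetilde{G}_T(x_{T-1}(z))$.

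The main (minor) obstacle is that this reduction modifies the final potential, so the induction must be carried not just for the specific potentials of the given model but for arbitrary admissible choices. I would therefore strengthen the inductive statement to: for any bounded measurable $\varphi$,
\[
\mathbb{E}\!\left[\prod_{t=0}^{T-1}\tfrac{1}{N}\sum_n G_t(Z_t^n)\cdot\tfrac{1}{N}\sum_n\varphi(Z_T^n)\right]=\int m_0(dx_0)\prod_{s=1}^{T}G_{s-1}(x_{s-1})\,m_s(x_{s-1},dx_s)\,\varphi(x_T).
\]
Because $\vartheta_{T-1}^N$ depends only on $G_0,\ldots,G_{T-2}$ through the resampling weights, the outer expectation of the reduced expression above is exactly the strengthened hypothesis applied at $T-1$ with $\varphi=\widehat{G}_{T-1}$. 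Specialising $\varphi=G_T$ and reorganising the iterated integral then recovers \eqref{eq:unbiased}.
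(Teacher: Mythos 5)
Your induction is correct, and it is worth pointing out that the paper gives no proof of this proposition at all: it defers to \citet{delMoral:book}, where the identity is established under (MR), and merely remarks that only the marginal unbiasedness (Ra) is really used there. Your argument supplies exactly that missing proof and makes the remark precise: because $\tfrac{1}{N}\sum_n G_T(Z_T^n)$ is linear in the individual terms, only the marginal law of each $A_{T-1}^n$ given $\mathcal{F}_{T-1}$ enters the conditional expectation, so no joint structure of the resampling indices (in particular no cycle invariance (Rb)) is needed; the cancellation of the self-normalising denominator against the factor $\tfrac{1}{N}\sum_j G_{T-1}(Z_{T-1}^j)$, followed by the strengthened inductive hypothesis with the modified potential $\widehat{G}_{T-1}=G_{T-1}\cdot\int m_T(\cdot,dx)\,G_T(x)$, is the standard backward-conditioning proof and is sound, since the marginal law of $(Z_{0:T-1}^{1:N},A_{0:T-2}^{1:N})$ under $\vartheta_T^N$ is indeed $\vartheta_{T-1}^N$, which does not involve $G_{T-1}$. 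Two small repairs. First, state the strengthened hypothesis for nonnegative measurable $\varphi$ rather than bounded $\varphi$: you apply it with $\varphi=\widehat{G}_{T-1}$, which need not be bounded because at this stage the $G_t$ are only assumed $\setX\rightarrow\Rp$ (Assumption (G) appears only later); with nonnegative integrands Tonelli's theorem carries every step through, both sides being possibly infinite. Second, the value your induction delivers, $\int m_0(dx_0)\prod_{s=1}^{T}\left\{G_{s-1}(x_{s-1})m_s(x_{s-1},dx_s)\right\}G_T(x_T)$, includes the final potential $G_T$; this is precisely the constant required to normalise $\pi_T^N$ in \eqref{eq:defpi}, i.e.\ what \eqref{eq:unbiased} means by $\ZT$, but it is off by one index from the literal display defining $\Zt$ in Section \ref{sec:Model}, which stops at $G_{T-1}$. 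Your computation is the right one; just be explicit that the identification with $\ZT$ uses the convention implicit in \eqref{eq:defpi} rather than that earlier display.
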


See, for example, Lemma~3 in \cite{DelMoral1996unbiased}.
In order to state the particle Gibbs sampler and prove it leaves
$\mathbb{Q}_T(\mathrm{d}z_{T})$
invariant, we commence first with the definition of the following
extended distribution $\pi_{T}^{N}$ of \cite{PMCMC} whose sampling
space is the sampling space of the measure $\vartheta_{T}^{N}$ augmented
to include a discrete random variable $N^{\star}\in1\dvtx N$,
%
%e5 #&#
\begin{eqnarray}
\label{eq:defpi} &&\pi_{T}^{N} \bigl(\mathrm{d}z_{0:T}^{1:N},\mathrm{d}a_{0:T-1}^{1:N},\mathrm{d}n^\star
\bigr)\nonumber\\
&&\quad = \frac{1}{\mathcal{Z}_T} \vartheta_{T}^{N}
\bigl(\mathrm{d}z_{0:T}^{1:N},\mathrm{d}a_{0:T-1}^{1:N} \bigr)
\Biggl[\prod_{t=0}^{T-1} \Biggl\{
\frac{1}{N}\sum_{n=1}^{N}G_{t}
\bigl(Z_{t}^{n}\bigr) \Biggr\} \Biggr] \frac{1}{N}
G_T\bigl(z_T^{n^\star}\bigr)
\nonumber
\\
&&\quad =\frac{1}{\mathcal{Z}_T}m_{0}^{\otimes N}\bigl(\mathrm{d}z_{0}^{1:N}
\bigr)
\\
&& \qquad{}\times\prod_{t=1}^{T}
\Biggl[ \Biggl\{ \frac{1}{N}\sum_{n=1}^{N}G_{t-1}
\bigl(z_{t-1}^{n}\bigr) \Biggr\} \prod
_{n=1}^{N} \bigl\{W_{t-1}^{a_{t-1}^n}
\bigl(z_{t-1}^{1:N}\bigr)\,\mathrm{d}a_{t-1}^n
q_{t}\bigl(z_{t-1}^{a_{t-1}^{n}},\mathrm{d}z_{t}^{n}
\bigr) \bigr\} \Biggr]\nonumber
\\
& &\qquad{}\times\frac{1}{N}G_{T}\bigl(z_{T}^{n^\star}
\bigr),\nonumber
\end{eqnarray}
again assuming \eqref{eq:multiRS}.
The fact that the expression above does define a correct probability
law (with a density that integrates to one) is an immediate consequence
of the unbiasedness property given in \eqref{eq:unbiased}.

%pr2 #&#
\begin{prop}
\label{prop:2} The distribution \textup{$\pi_{T}^{N}$}
is such that the marginal distribution of the random variable
$Z_{T}^{\star}\stackrel{\Delta}{=}Z_{T}^{N^{\star}}$
is $\mathbb{Q}_{T}$.
\end{prop}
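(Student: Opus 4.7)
The plan is to directly compute the joint marginal of $(N^\star,Z_T^{N^\star})$ under $\pi_T^N$, show that for each $\ns\in 1:N$ it equals $\tfrac{1}{N}\mathbb{Q}_T(dz_T^\star)$, and conclude by summing over $\ns$. The key device is to decompose $\pi_T^N$ along the \emph{ancestral lineage} of $N^\star$: for a fixed realisation $\ns$, set $b_T=\ns$ and $b_t=a_t^{b_{t+1}}$ for $t=T-1,\dots,0$. By the ``keep-and-append'' structure of $q_t$, the trajectory $z_T^{\ns}$ coincides with $(x_0^{b_0},x_1^{b_1},\dots,x_T^{b_T})$, i.e.\ the states visited along this lineage.

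For any fixed $\ns$ and $b_{0:T-1}$, I apply \eqref{eq:def_resampling_dist} with distinguished index $n=b_t$ to factorise
\[
\R_{t-1}(z_{t-1}^{1:N},a_{t-1}^{1:N})=W_{t-1}^{b_{t-1}}(z_{t-1}^{1:N})\,\Rc_{t-1}\!\left(z_{t-1}^{1:N},a_{t-1}^{1:N\backslash b_t}\,\big|\,A_{t-1}^{b_t}=b_{t-1}\right),
\]
which requires only Assumption (Ra). The crucial cancellation is then
\[
\Bigl[\tfrac{1}{N}\sum_{m=1}^N G_{t-1}(z_{t-1}^m)\Bigr]\,W_{t-1}^{b_{t-1}}(z_{t-1}^{1:N})=\tfrac{1}{N}\,G_{t-1}(z_{t-1}^{b_{t-1}}),
\]
which eliminates the awkward normalising sums appearing in \eqref{eq:defpi}. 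I then split each product $\prod_n q_t(z_{t-1}^{a_{t-1}^n},dz_t^n)$ into the on-lineage factor $q_t(z_{t-1}^{b_{t-1}},dz_t^{b_t})$ (using $a_{t-1}^{b_t}=b_{t-1}$) and the $N-1$ off-lineage factors. Collecting all on-lineage contributions together with $m_0(dz_0^{b_0})$ and the terminal potential $\tfrac{1}{N}G_T(z_T^{\ns})$ rebuilds the unnormalised Feynman--Kac density and yields $(\ZT/N^{T+1})\,\mathbb{Q}_T(dz_T^{\ns})$.

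The off-lineage variables integrate out trivially: each off-lineage $m_0(dz_0^n)$ and $q_t(z_{t-1}^{a_{t-1}^n},dz_t^n)$ is a probability kernel, and $\Rc_{t-1}(\cdot\mid A_{t-1}^{b_t}=b_{t-1})$ is, by construction, a probability distribution over the remaining $N-1$ ancestor indices. Dividing by $\ZT$ and summing over the $N^T$ possible lineages $b_{0:T-1}$ yields $\tfrac{1}{N}\mathbb{Q}_T(dz_T^\star)$ for each $\ns$; a final sum over $\ns$ delivers $\mathbb{Q}_T$. The main difficulty is purely bookkeeping: tracking carefully which of the $N$ copies of $m_0$, $q_t$ and $W_{t-1}$ sit on the lineage versus off it, and checking that the cycle-invariance part (Rb) is \emph{not} required here, consistent with the remark following \eqref{eq:defpi}. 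Once the displayed cancellation is spotted, the argument is essentially a telescoping rearrangement, and the auxiliary factor $\tfrac{1}{N}G_T(z_T^{\ns})$ in the definition of $\pi_T^N$ is precisely what is required to complete the Feynman--Kac numerator at the terminal time.
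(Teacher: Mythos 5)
Your argument is correct and is essentially the paper's own: the lineage decomposition, the factorisation \eqref{eq:def_resampling_dist} at the on-lineage index, and the cancellation of the normalising sums are exactly the change of variables by which the paper passes from \eqref{eq:defpi} to \eqref{eq:pi_TN_reformulated} (the computation it attributes to \citet{PMCMC} and summarises as integrating out the variables backward in time), from which the $\mathbb{Q}_T$ marginal of $Z_T^{\star}$ and the uniform law of the lineage indices are read off. Your remark that only (Ra), and not cycle invariance, is needed also matches the paper's own comment following Proposition \ref{prop:2}.
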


This proposition is proved in \cite{PMCMC}. To verify this result,
the expectation of functions of $Z_{T}^{\star}$ may be computed by
integrating out the variables in the reverse order $n^\star$, $x_{T}^{1:N}$,
$a_{T}^{1:N},\ldots, x_{1}^{1:N}$, $a_{0}^{1:N}$, $x_{0}^{1:N}$.
We now proceed to state the Gibbs algorithm of \cite{PMCMC}.

Given a sample from $\pi_{T}^{N}$, we can trace the ancestry of the
variable $Z_{T}^{\star}=Z_{T}^{N^{\star}}$ as follows. Let
$B_{t}^{\star}$
for $t\in0\dvtx T$ be the index of the time $t$ ancestor particle of
trajectory $Z_{T}^{\star}$, which is defined recursively backward
as $B_{T}^{\star}=N^{\star}$, then $B_{t}^{\star
}=A_{t}^{B_{t+1}^{\star}}$,
for $t=T-1,\ldots,0$. Finally, let $Z_{t}^{\star}=Z_{t}^{B_{t}^{\star}}$
for $t\in0\dvtx T$, so that $Z_{t}^{\star}$ is precisely the first $t+1$
components of $Z_{T}^{\star}$, that is, $Z_t^\star= [Z_T^\star]_{t+1}$.

Let $Z_{t}^{1:N\setminus\star}$ be the ordered collection of the
$N-1$ trajectories $Z_{t}^{n}$ such that $n\neq B_{t}^{\star}$
(i.e., $n\neq N^{\star}$ when $t=T$), and $Z_{0:T}^{1:N\setminus
\star
}=(Z_{0}^{1:N\setminus\star},\ldots,Z_{T}^{1:N\setminus\star})$.
Define similarly $A_{0:T-1}^{1:N\setminus\star
}=(A_{0}^{1:N\setminus
\star},\ldots,A_{T-1}^{1:N\setminus\star})$,
where $A_{t}^{1:N\setminus\star}$ is $A_{t}^{1:N}$ excluding
$A_{t}^{B_{t+1}^{\star}}$.
It is convenient to apply the following one-to-one transformation
to the argument of $\pi_{T}^{N}$:
\[
\bigl(z_{0:T}^{1:N},a_{0:T-1}^{1:N},n^{\star}
\bigr)\leftrightarrow \bigl(z_{0:T}^{1:N\setminus\star},a_{0:T-1}^{1:N\setminus\star
},z_{0:T}^{\star},b_{0:T-1}^{\star},n^{\star}
\bigr).
\]
With a slight abuse of notation, we identify the law induced by this
transformation (going to the representation with the $b_{t}^{\star}$
variables) as $\pi_{T}^{N}$ as well:
%
%e6 #&#
%e7 #&#
\begin{eqnarray}\label{eq:pi_TN_reformulated}
&&\pi_{T}^{N}\bigl(\mathrm{d}z_{0:T}^{1:N\setminus\star},
\mathrm{d}a_{0:T-1}^{1:N\setminus
\star
},\mathrm{d}z_{0:T}^{\star},\mathrm{d}b_{0:T-1}^{\star},\mathrm{d}n^{\star}
\bigr)\nonumber\\
&&\quad=
\frac{1}{N^{T+1}}\bigl(\mathrm{d}b_{0:T-1}^{\star}\,\mathrm{d}n^{\star}\bigr)
\mathbb {Q}_{T}\bigl(\mathrm{d}z_{T}^{\star}\bigr)\prod
_{t=0}^{T-1}\delta_{ (
[z_{T}^{\star
} ]_{t+1} )}
\bigl(\mathrm{d}z_{t}^{\star}\bigr)
\\
&&\qquad{}\times\prod_{n\neq b_{0}^{\star}}m_{0}
\bigl(\mathrm{d}z_{0}^{n}\bigr) \Biggl[\prod
_{t=1}^{T} \prod_{n\neq b_{t}^{\star}}W_{t-1}^{a_{t-1}^n}
\bigl(z_{t-1}^{1:N}\bigr) \,\mathrm{d}a_{t-1}^n
q_{t}\bigl(z_{t-1}^{a_{t-1}^{n}},\mathrm{d}z_{t}^{n}
\bigr) \Biggr].
\nonumber
\end{eqnarray}
Passage from (\ref{eq:defpi}) to (\ref{eq:pi_TN_reformulated}) is
straightforward. It is worth noting that the marginal law of
$(B_{0:T-1}^{\star},N^{\star})$
is the uniform law on the product space $(1\dvtx N)^{T+1}$.

Given a sample $Z_{T}=z_{T}$ from $\mathbb{Q}_{T}$, consider the
following three step sampling procedure that transports $Z_{T}=z_{T}$
to define a new random variable $Z_{T}'\in\mathcal{X}^{T+1}$. Step
1 is to sample the ancestors $(B_{0:T-1}^{\star},N^{\star})$ of the
random variable {$Z_{0:T}^{\star}=z_{0:T}$}
from $\pi_{T}^{N}(\mathrm{d}b_{0:T-1}^{\star},\mathrm{d}n^{\star})$; step~2 is to generate
the $N-1$ remaining trajectories $(Z_{0:T}^{1:N\setminus\star
},A_{0:T-1}^{1:N\setminus\star})$
conditional on the trajectory $(Z_{0:T}^{\star},B_{0:T-1}^{\star
},N^{\star})$
from
\[
\pi_{T}^{N}\bigl(\mathrm{d}z_{0:T}^{1:N\setminus\star},\mathrm{d}a_{0:T-1}^{1:N\setminus
\star
}|z_{0:T}^{\star},b_{0:T-1}^{\star},n^{\star}
\bigr).
\]
(There is no specific difficulty in performing step 2, details to
follow, which is pretty much equivalent to the problem of generating
a particle filter for $T$ time steps.) Note that steps 1 and 2 are
both Gibbs step with respect to \prettyref{eq:pi_TN_reformulated}.
Step 3 is to resample the index $N^{\star}$ from \prettyref{eq:defpi};
hence $N^{\star}$ is sampled from $\pi_{T}^{N}(\mathrm{d}n^{\star
}|z_{0:T}^{1:N},a_{0:T-1}^{1:N})=\mathcal{M}(W_{T}^{1:N}(z_{T}^{1:N}))$,
which is also a Gibbs step, but this time with respect to \prettyref{eq:defpi};
recall that $W_{T}^{n}(z_{T}^{1:N})=G_{T}(z_{T}^{n})/\sum_{m}G_{T}(z_{T}^{m})$.
It follows from Proposition~\ref{prop:2} that the law of
{$Z'_{T}=Z_{T}^{N^{\star}}$}
is also $\mathbb{Q}_{T}$.

Steps 1 to 3 therefore define a Markov kernel $P_{T}^{N}$ that maps
$\setX^{T+1}\rightarrow\mathcal{P}(\setX^{T+1})$ and has $\mathbb
{Q}_T(\mathrm{d}z_{T})$
as its invariant measure. In practice, however,
step 1 is redundant and we may as well set $(b_{0:T-1}^\star,n^\star)$
to the (arbitrary) value $(1,\ldots,1)$
before applying steps 2 and 3, as per the following remark.
%
%re1 #&#
\begin{remark}
\label{rem:Equivalent} The image of $z_{T}^{\star}\in\mathcal{X}^{T+1}$
under $P_{T}^{N}$ is unchanged
by the choice of $(b_{0:T-1}^{\star},n^{\star})$
for the realization of $(B_{0:T-1}^{\star},N^{\star})$
in the initialization of the CPF kernel.
\end{remark}
This remark follows from the fact that the joint distribution
of $Z_T^{1:N}$ in \eqref{eq:defpi} is exchangeable.
%This remark explains why the particle Gibbs kernel may be defined in
%such a way that the update does not depend on the current values of
%the labels $(B_{0:T-1}^\star,N^\star)$.
On the other hand, we shall see in
Section~\ref{sec:Backward-sampling} that the equivalent representation
of the particle Gibbs kernel as an update that involves a step that
re-simulates $(B_{0:T-1}^\star,N^\star)$ will be useful to establish
certain properties.

To conclude, and following \cite{PMCMC}, the
CPF kernel may be defined as the succession of the following two steps,
from current value $z_T^\star\in\setX^{T+1}$.

\begin{enumerate}[{CPF-2}]
\item[{CPF-1}] Generate the $N-1$ remaining trajectories of the
particle system
by sampling from the
conditional distribution (deduced from \eqref{eq:pi_TN_reformulated}):
%
%e8 #&#
\begin{eqnarray}\label{eq:cpfLaw-1}
&&\pi_{T}^{N}\bigl(\mathrm{d}z_{0:T}^{2:N},\mathrm{d}a_{0:T-1}^{2:N}|Z_{0:T}^{1}=z_{0:T}^{\star
},A_{0:T-1}^{1}=
(1,\ldots,1 ),N^{\star}=1\bigr)
\nonumber
\\[-8pt]
\\[-8pt]
\nonumber
&&\quad=m_{0}^{\otimes(N-1)}\bigl(\mathrm{d}z_{0}^{2:N}\bigr)
\prod_{t=1}^{T} \Biggl[\prod
_{n=2}^{N}W_{t-1}^{a_{t-1}^n}
\bigl(z_{t-1}^{1:N}\bigr)\,\mathrm{d}a_{t-1}^n
q_{t}\bigl(z_{t-1}^{a_{t-1}^{n}},\mathrm{d}z_{t}^{n}
\bigr) \Biggr]
\end{eqnarray}
sequentially, that is, sample independently $Z_0^n\sim m_0(\mathrm{d}z_0)$ for
all $n\in2\dvtx N$, then sample independently $A_0^n\sim\mathcal
{M}(W_0^{1:N}(z_0^{1:N}))$ for all $n\in2\dvtx N$, and so on.
(This is equivalent to running a particle algorithm, except that the
trajectory with labels
$(1,\ldots,1)$ is kept fixed.)

\item[{CPF-2}] Sample $N^{\star}$ from $\mathcal
{M}(W_{T}^{1:N}(z_{T}^{1:N}))$,
that is, perform a Gibbs update of $N^\star$ conditional on all the
other variables, relative
to (\ref{eq:pi_TN_reformulated}), and return trajectory $Z^{N^\star}$.
\end{enumerate}

With all these considerations, one sees that the CPF algorithm defines
the following kernel $P_{T}^{N}\dvtx\setX^{T+1}\rightarrow\mathcal
{P}(\setX^{T+1})$:
for $z_{T}^{\star}\in\setX^{T+1}$,
%
%e9 #&#
\begin{eqnarray}\label{eq:cpfMarkovKer}
&&\bigl(P_{T}^{N}\varphi\bigr) \bigl(z_{T}^{\star}
\bigr) \nonumber\\
&&\quad =\int P_{T}^{N}\bigl(z_{T}^{\star
},\mathrm{d}z_{T}^{\prime}
\bigr)\varphi\bigl(z_{T}^{\prime}\bigr)
\\
&&\quad =\mathbb{E}_{\pi_{T}^{N}} \Biggl\{ \frac{G_{T}(z_{T}^{\star
})}{G_{T}(z_{T}^{\star})+\sum_{m=2}^{N}G_{T}(Z_{T}^{m})}\varphi
\bigl(z_{T}^{\star}\bigr)+\sum_{n=2}^{N}
\frac
{G_{T}(Z_{T}^{n})}{G_{T}(z_{T}^{\star
})+\sum_{m=2}^{N}G_{T}(Z_{T}^{m})}\varphi\bigl(Z_{T}^{n}\bigr) \Biggr
\}.\nonumber
\end{eqnarray}

%s3 #&#
\section{A coupling of the particle Gibbs Markov kernel}
\label{sec:Main-result}

This section is dedicated to establishing Theorem~\ref{thmm:epsilon}
below. We first make the
following assumption, which is a common assumption to establish the
stability of a Feynman--Kac system (e.g.,~\cite{delMoral:book}).

%as1 #&#
\renewcommand{\theassumption}{(G)}\label{assG}
\begin{assumption}There exists a sequence of finite positive numbers
$\{g_{t}\}_{t\geq0}$ such that $0<G_{t}(x_{t})\leq g_{t}$ for all
$x_{t}\in\mathcal{X}$, $t\geq0$. Moreover,
\[
\int m_{0}(\mathrm{d}x_{0})G_{0}(x_{0})
\geq\frac{1}{g_{0}},\qquad \inf_{x_{t-1}\in
\setX}\int m_t(x_{t-1},\mathrm{d}x_{t})G_{t}(x_{t})
\geq\frac{1}{g_{t}}, \qquad t>0.
\]
\end{assumption}
Loosely speaking this assumption prevents the reference trajectory of
the particle Gibbs kernel
from dominating the other particles during resampling.

%th3 #&#
\begin{thmm}
\label{thmm:epsilon}
Under Assumption \textup{\hyperref[assG]{(G)}}, for any $\varepsilon\in(0,1)$
and $T\in\mathbb{N}^{+}$,$ $ there exists $N_{0}\in\mathbb{N}^{+}$,
such that, for all $N\geq N_{0}$, $x_{0:T}$, $\check{x}_{0:T}\in\setX^{T+1}$,
and $\varphi\dvtx\setX^{T+1}\rightarrow[-1,1]$, one has
\[
\bigl\llvert P_{T}^{N}(\varphi) (x_{0:T})-P_{T}^{N}(
\varphi) (\check {x}_{0:T})\bigr\rrvert \leq\varepsilon.
\]
\end{thmm}

The supremum with respect to $\varphi$ of the bounded quantity is the
total variation between the two corresponding distributions (defined
by kernel $P_T^N$ and the two starting points $x_{0:T}$, $\check{x}_{0:T}$).
A~direct corollary of Theorem~\ref{thmm:epsilon} is that, for $N$ large
enough, the kernel $P_{T}^{N}$ is
arbitrarily close to the independent kernel that samples from $\mathbb{Q}_{T}$.
This means that, again for $N$ large enough, the kernel
$P_{T}^{N}$ is uniformly ergodic (see, e.g., \cite{roberts2004general}
for a definition), with an arbitrarily small ergodicity coefficient.

The proof of Theorem~\ref{thmm:epsilon} is based on
coupling: let $\bar{\pi}(\mathrm{d}z_{t}^{\star},\mathrm{d}\check{z}_{t}^{\star})$
be a joint distribution for the couple $(Z_{t}^{\star},\check
{Z}_{t}^{\star})$,
such that the marginal distribution of $Z_{t}^{\star}$, respectively. $\check
{Z}_{t}^{\star}$,
is $P_{T}^{N}(x_{0:T},\mathrm{d}z_{T}^{\star})$, respectively,
$P_{T}^{N}(\check
{x}_{0:T},\mathrm{d}\check{z}_{T}^{\star})$.
Then
\begin{eqnarray*}
P_{T}^{N}(\varphi) (x_{0:T})-P_{T}^{N}(
\varphi) (\check{x}_{0:T}) & =&\E _{\bar{\pi}} \bigl\{ \varphi
\bigl(Z_{T}^{\star}\bigr)-\varphi\bigl(\check
{Z}_{T}^{\star
}\bigr) \bigr\}
\\
& =&\E_{\bar{\pi}} \bigl\{ \bigl(\varphi\bigl(Z_{T}^{\star}
\bigr)-\varphi \bigl(\check {Z}_{T}^{\star}\bigr) \bigr)
\mathbb{I}_{\{Z_{T}\neq\check{Z}_{T}\}
} \bigr\}
\\
& \leq&2\P_{\bar{\pi}} \bigl(Z_{T}^{\star}\neq\check
{Z}_{T}^{\star} \bigr).
\end{eqnarray*}

The following section describes the particular coupling construction we
are using. Section~\ref{sub:Proof-of-inequality} then establishes that
this particular
coupling ensures that
%
%e10 #&#
\begin{equation}
\P_{\bar{\pi}}\bigl(Z_{T}^{\star}\neq\check{Z}_{T}^{\star}
\bigr)\leq \varepsilon /2\label{eq:proba_epsilon}
\end{equation}
for $N$ large enough, which concludes the proof.

%s3.1 #&#
\subsection{Coupling construction}
\label{subsec:couplingConstruct}

The coupling operates on the extended space corresponding to the support
of the conditional distribution \prettyref{eq:cpfLaw-1}. The idea
is to construct two conditional particle systems generated marginally
from \prettyref{eq:cpfLaw-1}, that is, two systems of $N-1$ trajectories,
denoted, respectively, $(Z_{0:T}^{2:N},A_{0:T-1}^{2:N})$ and $(\check
{Z}_{0:T}^{2:N},\check{A}_{0:T-1}^{2:N})$,
that complement, respectively, the trajectory $x_{0:T}$ (first system)
and $\check{x}_{0:T}$ (second system), in such a way that these trajectories
coincide as much as possible. We will denote by $\mathcal
{C}_{t}\subset1\dvtx N$
the set which contains the particle labels $n$ such that $Z_{t}^{n}$
and $\check{Z}_{t}^{n}$ are coupled. Let $\mathcal
{C}_t^{c}=(1\dvtx N)\setminus\mathcal{C}_t$;
by construction, $\mathcal{C}_t^{c}$ always contains $1$, since the frozen
trajectory is relabelled as trajectory $1$ in \prettyref{eq:cpfLaw-1}.
Before we define recursively $\mathcal{C}_t$, $(Z_{0:T}^{2:N},A_{0:T-1}^{2:N})$
and $(\check{Z}_{0:T}^{2:N},\check{A}_{0:T-1}^{2:N})$, we need to
introduce several quantities, such as the following empirical measures,
for $t\geq0$,
\[
\xi_{\mathcal{C}_t}=\sum_{n\in\mathcal{C}_t}\delta
_{Z_{t}^{n}}(\mathrm{d}z_{t}), \qquad \xi_{\mathcal{C}_t
^{c}}=\sum
_{n\in\mathcal{C}_t^{c}}\delta_{Z_{t}^{n}}(\mathrm{d}z_{t}),\qquad \check{
\xi}_{\mathcal{C}_t
^{c}}=\sum_{n\in\mathcal{C}_t^{c}}\delta_{\check{Z}_{t}^{n}}(\mathrm{d}z_{t}),
\]
the following probability measures, $\mu_{0}(\mathrm{d}z_{0})=m_{0}(\mathrm{d}z_{0})$,
and for $t\geq1$,
\begin{eqnarray*}
\mu_{t}(\mathrm{d}z_{t}) & =&\int_{\setX^{t}}
\Psi_{G_{t-1}}(\xi_{\mathcal{C}_{t-1}
}) (\mathrm{d}z_{t-1})q_{t}(z_{t-1},\mathrm{d}z_{t}),
\\
\mu_{t}^{c}(\mathrm{d}z_{t}) & =&\int
_{\setX^{t}}\Psi_{G_{t-1}}(\xi _{\mathcal{C}_{t-1}
^{c}})
(\mathrm{d}z_{t-1})q_{t}(z_{t-1},\mathrm{d}z_{t}),
\\
\check{\mu}_{t}^{c}(\mathrm{d}\check{z}_{t}) & =&\int
_{\setX^{t}}\Psi _{G_{t-1}}(\check{\xi}_{\mathcal{C}_{t-1}^{c}}) (\mathrm{d}
\check {z}_{t-1})q_{t}(z_{t-1},\mathrm{d}z_{t}),
\end{eqnarray*}
where
\[
\Psi_{G_{t-1}}(\xi_{\mathcal{C}_{t-1}}) (\mathrm{d}z_{t-1})=
\frac{\xi
_{\mathcal{C}_{t-1}
}(\mathrm{d}z_{t-1})G_{t-1}(z_{t-1})}{\int_{\setX^{t}}\xi_{\mathcal{C}_{t-1}
}(\mathrm{d}z_{t-1})G_{t-1}(z_{t-1})},
\]
the measures $\Psi_{G_{t-1}}(\xi_{\mathcal{C}_{t-1}^{c}})(\mathrm{d}z_{t-1})$
and $\Psi
_{G_{t-1}}(\check{\xi}_{\mathcal{C}_{t-1}^{c}})(\mathrm{d}z_{t-1})$
being defined similarly, and finally the constants
\[
\lambda_{t-1}=\frac{\xi_{\mathcal{C}_{t-1}}(G_{t-1})}{\xi
_{\mathcal{C}_{t-1}}(G_{t-1})+\xi
_{\mathcal{C}_{t-1}
^{c}}(G_{t-1})},\qquad \check{\lambda}_{t-1}=
\frac{\xi_{\mathcal{C}_{t-1}
}(G_{t-1})}{\xi_{\mathcal{C}_{t-1}}(G_{t-1})+\check{\xi}_{\mathcal
{C}_{t-1}^{c}}(G_{t-1})},
\]
and the measures
\begin{eqnarray*}
\nu_{t} & =&\frac{|\lambda_{t-1}-\check{\lambda}_{t-1}|}{1-\lambda
_{t-1}\wedge\check{\lambda}_{t-1}}\mu_{t}+\frac{1-\lambda
_{t-1}\vee
\check{\lambda}_{t-1}}{1-\lambda_{t-1}\wedge\check{\lambda
}_{t-1}}\mu
_{t}^{c},
\\
\check{\nu}_{t} & =&\frac{|\lambda_{t-1}-\check{\lambda
}_{t-1}|}{1-\lambda_{t-1}\wedge\check{\lambda}_{t-1}}\mu_{t}+
\frac
{1-\lambda_{t-1}\vee\check{\lambda}_{t-1}}{1-\lambda_{t-1}\wedge
\check
{\lambda}_{t-1}}\check{\mu}_{t}^{c},
\\
\kappa_{t}(\mathrm{d}z_{t},\mathrm{d}\check{z}_{t}) & =&
\nu_{t}(\mathrm{d}z_{t})\check{\mu }_{t}^{c}(\mathrm{d}
\check{z}_{t})\mathbb{I}_{\{\lambda_{t-1}>\check{\lambda
}_{t-1}\}}+\mu_{t}^{c}(\mathrm{d}z_{t})
\check{\nu}_{t}(\mathrm{d}\check {z}_{t})\mathbb {I}_{\{\check{\lambda}_{t-1}\geq\lambda_{t-1}\}}.
\end{eqnarray*}

We now construct $\mathcal{C}_t$ and the two particle systems as
follows. First,
set $\mathcal{C}_{0}=2\dvtx N$, hence $\mathcal{C}_{0}^{c}=\{1\}$, draw
$Z_{0}^{n}$ independently from $m_{0}$, and set $\check{Z}_{0}^{n}=Z_{0}^{n}$,
for all $n\in2\dvtx N$. Recall that $Z_{0}^{1}=x_{0}$ and $\check
{Z}_{0}^{1}=\check{x}_{0}$.

To progress from time $t-1\geq0$ to time $t$, we note that there
is a $\lambda_{t-1}$ (resp., $\check{\lambda}_{t-1}$) probability
that $A_{t-1}^{n}$ (resp., $\check{A}_{t-1}^{n}$) is drawn from
$\mathcal{C}_{t-1}$,
for any $n\in2\dvtx N$. Hence, the maximum coupling probability for $
(A_{t-1}^{n},\check{A}_{t-1}^{n} )$
is $\lambda_{t-1}\wedge\check{\lambda}_{t-1}$. Thus, with probability
$\lambda_{t-1}\wedge\check{\lambda}_{t-1}$, we sample $A_{t-1}^{n}$
from $\mathcal{C}_{t-1}$ (with probability proportional to
$G_{t-1}(Z_{t-1}^{m})$,
for $m\in\mathcal{C}_{t-1}$), $Z_{t}^{n}\sim
q_{t}(Z_{t-1}^{A_{t-1}^{n}},\mathrm{d}Z_{t})$,
and take $ (\check{A}_{t-1}^{n},\check{Z}_{t}^{n} )=
(A_{t-1}^{n},Z_{t}^{n} )$.
Marginally, $Z_{t}^{n}=\check{Z}_{t}^{n}$ is drawn from $\mu_{t}$,
and we set $n\in\mathcal{C}_t$.

Conditional on not being coupled (hence, we set $n\in\mathcal{C}_t^{c})$,
$(A_{t-1}^{n},Z_{t}^{n})$
and $(\check{A}_{t-1}^{n},\check{Z}_{t-1}^{n})$ may be sampled independently
using the same ideas. Assume $\lambda_{t-1}\leq\check{\lambda}_{t-1}$.
With probability $(\check{\lambda}_{t-1}-\lambda_{t-1})$, one should
sample $A_{t-1}^{n}$ from $\mathcal{C}_{t-1}^{c}$ and $\check{A}_{t-1}^{n}$
from $\mathcal{C}_{t-1}$. And with probability $1-\lambda_{t-1}\vee
\check
{\lambda}_{t-1}$,
both $A_{t}^{n}$ and $\check{A}_{t}^{n}$ may be sampled from $\mathcal
{C}_t^{c}$.
Either way, $Z_{t}^{n}\sim q_{t}(Z_{t-1}^{A_{t-1}^{n}},\mathrm{d}Z_{t})$,
$\check{Z}_{t}^{n}\sim q_{t}(\check{Z}_{t-1}^{\check{A}_{t-1}^{n}},\mathrm{d}Z_{t})$,
independently. By symmetry, the case $\lambda_{t-1}\geq\check
{\lambda}_{t-1}$
works along the same lines. Marginally (when integrating out $A_{t-1}^{n}$
and $\check{A}_{t-1}^{n})$, and conditional on not being coupled,
the pair $ (Z_{t}^{n},\check{Z}_{t}^{n} )$ is drawn from
$\kappa_{t}$. Clearly, this construction maintains the correct marginal
distribution for the two particle systems.

At the final time $T$, the trajectories $Z_{T}^{\star}$, $\check
{Z}_{T}^{\star}$
that are eventually selected, that is, the output of Markov kernels
$P_{T}^{N}(x_{0:T},\mathrm{d}z_{T}^{\star})$ and $P_{T}^{N}(\check
{x}_{0:T},\mathrm{d}\check{z}_{T}^{\star})$
may be coupled exactly in the same way: with probability $\lambda
_{T}\wedge\check{\lambda}_{T}$,
they are taken to be equal, and $Z_{T}=Z_{T}^{\star}$ is sampled
from $\mu_{T}$; and with probability $(1-\lambda_{T}\wedge\check
{\lambda}_{T})$,
$(Z_{T}^{\star},\check{Z}_{T}^{\star})$ is sampled from $\kappa_{T}$.

The motivation for this coupling construction is that it is the \textit{maximal
coupling} (see \cite{Lin1992} for a definition) for quantifying the
total variation norm between CPF kernels
{$P_{T}^{N}(x_{0:T},\cdot)$} and $P_{T}^{N}(\check{x}_{0:T},\cdot)$
when either $T=0$ or $T>0$ and $m_{t}$ is a Dirac measure for all
$t$. Details of proof of this fact can be obtained from the authors.

%s3.2 #&#
\subsection{\texorpdfstring{Proof of inequality \protect\eqref{eq:proba_epsilon}}{Proof of inequality (9)}}
\label{sub:Proof-of-inequality}

We now prove that the coupling construction described in the previous
section is such that inequality \eqref{eq:proba_epsilon} holds for
$N$ large enough.

By construction, one has that $\P(Z_{T}^{\star}=\check{Z}_{T}^{\star
})\geq\E(\lambda_{T}\wedge\check{\lambda}_{T})$.
Consider the event
\[
\ev= \biggl\{ \frac{\xi_{\mathcal{C}_T}(G_{T})}{\llvert \mathcal
{C}_T\rrvert }\geq\frac
{\mu
_{T}(G_{T})}{2} \biggr\}.
\]
Given Assumption \hyperref[assG]{(G)}, and the definition of $\lambda_{T}$, one has
\[
1-\lambda_{T}\leq\frac{g_{T}\llvert \mathcal{C}_T^{c}\rrvert }{\xi
_{\mathcal{C}_T}(G_{T})}
\]
and the same inequality holds for $1-\check{\lambda}_{T}$, which
leads to
\[
(1-\lambda_{T}\wedge\check{\lambda}_{T})\times
\mathbb{I}_{\ev
}\leq2\frac
{\llvert \mathcal{C}_T^{c}\rrvert g_{T}}{|\mathcal{C}_T|\mu
_{T}(G_{T})}\times\mathbb {I}_{\ev
}
\leq2\frac{\llvert \mathcal{C}_T^{c}\rrvert g_{T}^{2}}{|\mathcal
{C}_T|}\times\mathbb {I}_{\ev},
\]
where the second inequality is due to Assumption \hyperref[assG]{(G)}. Therefore, for
$k_{1},\ldots,k_{T}\in1:N$,
\begin{eqnarray*}
&&\mathbb{E} \bigl\{ (\lambda_{T}\wedge\check{\lambda }_{T}
)\times\mathbb{I}_{\ev}\vert\llvert \C_{1}\rrvert
=N-k_{1},\ldots,\llvert \mathcal{C}_T\rrvert
=N-k_{T} \bigr\}
\\
&&\quad\geq \biggl(1-2\frac{k_{T}}{N-k_{T}}g_{T}^{2}
\biggr)^{+}\mathbb {E} \bigl\{ \mathbb{I}_{\ev}\vert\llvert
\C_{1}\rrvert =N-k_{1},\ldots,\llvert \mathcal{C}_T
\rrvert =N-k_{T} \bigr\}.
\end{eqnarray*}
Conditional on $Z_{T-1}^{1:N}$, and $n\in\mathcal{C}_T$, $Z_{T}^{n}$
is an
independent draw from $\mu_{T}(\mathrm{d}z_{T})$. Thus, in order
to lower bound the probability of event ${\mathcal{A}}$,
we may apply Hoeffding's inequality \cite{hoeffding1963probability}
to the empirical mean $\xi_{\mathcal{C}_T}(G_{T})/\llvert \mathcal
{C}_T\rrvert $ as follows.
Again per Assumption \hyperref[assG]{(G)}, noting that
$0< G_T(z_T^n)\leq g_T$ and the one-step predicted potential is bounded
below uniformly by $g_T^{-1}$,
\begin{eqnarray*}
&& \mathbb{E} \bigl\{ (1-\mathbb{I}_{\ev} )\rrvert \llvert
\C_{1}\vert=N-k_{1},\ldots,\llvert \C_{T}\rrvert
=N-k_{T},Z_{T-1}^{1:N}=z_{T-1}^{1:N}
\bigr\}
\\
& &\quad= \P \biggl( \frac{\xi_{\mathcal{C}_T}(G_{T})}{\llvert \mathcal
{C}_T\rrvert }< \frac{\mu_{T}(G_{T})}{2} \vert \llvert
\C_{1}\rrvert =N-k_{1},\ldots,\llvert \C _{T}
\rrvert =N-k_{T},Z_{T-1}^{1:N}=z_{T-1}^{1:N}
\biggr)
\\
&&\quad \leq\exp \biggl(-2(N-k_{T})\frac{\mu
_{T}(G_{T})^{2}}{4g_{T}^{2}} \biggr)
\\
& &\quad\leq\exp \biggl(-\frac{(N-k_{T})}{2g_{T}^{4}} \biggr).
\end{eqnarray*}
Thus,
\begin{eqnarray*}
&&\mathbb{E} \bigl\{ (\lambda_{T}\wedge\check{\lambda}_{T})
\times \mathbb{I}_{\ev}\vert\llvert C_{1}\rrvert
=N-k_{1},\ldots ,\llvert C_{T}\rrvert =N-k_{T}
\bigr\}
\\
&&\quad\geq \biggl(1-2\frac{k_{T}}{N-k_{T}}g_{T}^{2}
\biggr)^{+} \biggl\{ 1-\exp \biggl(-\frac{(N-k_{T})}{2g_{T}^{4}} \biggr) \biggr
\}.
\end{eqnarray*}
Finally, for any sequence of integers $L_{1:T}\in(1:N)^{T}$,
\begin{eqnarray*}
\mathbb{E}(\lambda_{T}\wedge\check{\lambda}_{T}) & \geq&
\sum_{k_{1}=1}^{L_{1}}\cdots\sum
_{k_{T}=1}^{L_{T}}\E \bigl\{ (\lambda_{T}\wedge
\check{\lambda}_{T} ) \times\mathbb{I}_{\mathcal{A}}\vert\llvert
C_{1}\rrvert =N-k_{1},\ldots,\llvert C_{T}
\rrvert =N-k_{T} \bigr\}
\\
&&{} \times\P \bigl(\llvert \C_{1}\rrvert =N-k_{1},\ldots,
\llvert \mathcal{C}_T\rrvert =N-k_{T} \bigr)
\\
& \geq& \biggl(1-2\frac{L_{T}}{N-L_{T}}g_{T}^{2} \biggr)
\biggl\{ 1-\exp \biggl(-\frac{(N-L_{T})}{2g_{T}^{4}} \biggr) \biggr\}
\\
&&{} \times\sum_{k_{1}=1}^{L_{1}}\cdots\sum
_{k_{T}=1}^{L_{T}}\P \bigl(\llvert \C_{1}\rrvert
=N-k_{1},\ldots ,\llvert \mathcal{C}_T\rrvert
=N-k_{T} \bigr)
\end{eqnarray*}
provided $N$ is large enough. To conclude, we resort to a technical
lemma, proven in the following section, that states it is possible
to choose $L_{1},\ldots,L_{T}$ large enough so as to make the sum
of probabilities in the last line above as large as needed. In addition,
for $L_{T}$ fixed and $N$ large enough, the two factors in front
of that sum are arbitrarily close to one.

%s3.3 #&#
\subsection{Technical lemma}
%
%le4 #&#
\begin{lem}
\label{lem:probC1toCt_lowerBound}Under Assumption \textup{\hyperref[assG]{(G)}},
and for any $\delta\in(0,1)$, $T\in\mathbb{N}^{+}$, there exist
positive integers $N_{0}$, $L_{1},\ldots,L_{T}$ such that for any
$N\geq N_{0}$ and $x_{0:T}$, $\check{x}_{0:T}\in\setX^{T+1}$,
\[
\sum_{k_{1}=1}^{L_{1}}\cdots\sum
_{k_{T}=1}^{L_{T}}\P \bigl(\llvert \C _{1}\rrvert
=N-k_{1},\ldots,\llvert \mathcal{C}_T\rrvert
=N-k_{T} \bigr)\geq(1-\delta)^{3T}.
\]
\end{lem}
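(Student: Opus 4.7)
The plan is a forward induction on $t$. I would show that with probability at least $(1-\delta)^{3T}$ the uncoupled set sizes $|\C_t^c|$ remain below explicitly chosen integers $L_1,\ldots,L_T$ for all $t\leq T$ simultaneously; since the frozen trajectory is always uncoupled, $|\C_t^c|\geq 1$ automatically, so the sum in the statement equals exactly that joint probability. At each inductive step I would introduce three ``good events'', each of conditional probability at least $1-\delta$ given the past, whose intersection forces $|\C_t^c|\leq L_t$; chaining via the tower property then delivers the $(1-\delta)^{3T}$ bound.

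Two structural facts about the coupling of Section~\ref{subsec:couplingConstruct} underpin the argument. First, conditional on the $\sigma$-field $\mathcal{F}_{t-1}$ generated by the particles, weights and ancestor indices up to time $t-1$, the indicators $\{\mathbb{I}(n\notin\C_t)\}_{n=2}^N$ are i.i.d.\ Bernoulli with parameter $p_{t-1}:=1-\lambda_{t-1}\wedge\check\lambda_{t-1}$, which is $\mathcal{F}_{t-1}$-measurable. Second, for every $n\in\C_{t-1}$ the value $Z_{t-1}^n$ is an independent draw from $\mu_{t-1}$ and $\check Z_{t-1}^n=Z_{t-1}^n$. The second fact is exactly what lets a Hoeffding-type concentration apply to $\xi_{\C_{t-1}}(G_{t-1})$.

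Granted the inductive hypothesis $|\C_{t-1}^c|\leq L_{t-1}$, I would define
\[
\ev_t^{(1)}=\left\{\xi_{\C_{t-1}}(G_{t-1})\geq \tfrac{|\C_{t-1}|}{2g_{t-1}}\right\},\qquad
\ev_t^{(2)}=\left\{\check\xi_{\C_{t-1}}(G_{t-1})\geq \tfrac{|\C_{t-1}|}{2g_{t-1}}\right\}.
\]
Hoeffding's inequality together with Assumption~(G) (which gives $\mu_{t-1}(G_{t-1})\geq 1/g_{t-1}$ and $G_{t-1}\leq g_{t-1}$) bounds each conditional failure probability by $2\exp(-|\C_{t-1}|/(2g_{t-1}^4))$, which is below $\delta$ for $N$ large relative to $L_{t-1}$. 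On $\ev_t^{(1)}\cap\ev_t^{(2)}$ the elementary bound $1-\lambda_{t-1}\leq g_{t-1}|\C_{t-1}^c|/\xi_{\C_{t-1}}(G_{t-1})$ and its symmetric analogue give $p_{t-1}\leq 2g_{t-1}^2 L_{t-1}/(N-L_{t-1})\leq 4g_{t-1}^2 L_{t-1}/N$ for $N$ sufficiently large. Then $|\C_t^c|-1$ is stochastically dominated by a Binomial$(N-1,p_{t-1})$ variable with mean at most $M_t:=4g_{t-1}^2 L_{t-1}$, so a Markov bound yields $\ev_t^{(3)}:=\{|\C_t^c|\leq L_t\}$ of conditional probability at least $1-\delta$ upon taking $L_t:=\lceil M_t/\delta\rceil+1$.

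The $L_t$'s grow geometrically in $t$ with rate $O(g_{t-1}^2/\delta)$ and remain finite integers depending only on $\delta$, $T$ and the bounds $\{g_s\}_{s<T}$, while $N_0$ is chosen large enough for the Hoeffding tails, the linearisation $N-L_{t-1}\geq N/2$, and the Markov step to be simultaneously valid over $t\leq T$. The main obstacle is the conditional-independence bookkeeping: one has to verify rigorously that conditional on the random set $\C_{t-1}$ the coupled particles $(Z_{t-1}^n)_{n\in\C_{t-1}}$ really are i.i.d.\ $\mu_{t-1}$-draws, so that Hoeffding applies cleanly. This requires a careful filtration argument showing that the Bernoulli coupling decisions which determine membership in $\C_{t-1}$ do not bias the underlying transition draws.
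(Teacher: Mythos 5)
Your argument is correct, and it reaches the stated $(1-\delta)^{3T}$ bound by a genuinely different route for the key counting step. Like the paper, you rely on the two structural facts of the coupling under (MR): conditionally on the past, the non-coupling indicators over $n\in2:N$ are i.i.d.\ Bernoulli with parameter $1-\lambda_{t-1}\wedge\check{\lambda}_{t-1}$ (so $|\C_{t}^{c}|-1$ is conditionally binomial), and the coupled particles are conditionally i.i.d.\ draws from $\mu_{t-1}$, so Hoeffding plus the bound $1-\lambda_{t-1}\le g_{t-1}|\Ctm^{c}|/\xi_{\Ctm}(G_{t-1})$ controls the coupling probability; the "filtration bookkeeping" you flag as the main obstacle is exactly the property the paper asserts and uses directly in its own proof, so it is not a real gap. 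Where you diverge is in how the count $|\C_{t}^{c}|$ is controlled: the paper writes the joint probability $\P(|\C_{1}|=N-k_{1},\ldots,|\CT|=N-k_{T})$ exactly (via permutation invariance and the binomial pmf), replaces the binomial factors by Poisson probabilities through the technical Lemma \ref{lem:BloodyLemma}, and chooses $L_{1},\ldots,L_{T}$ from Poisson tail sums using monotonicity of $x\mapsto e^{-x}\sum_{k\le L}x^{k}/k!$; you instead run a forward induction and bound the conditional mean of $|\C_{t}^{c}|-1$ by $4g_{t-1}^{2}L_{t-1}$, then apply Markov's inequality with $L_{t}=\lceil 4g_{t-1}^{2}L_{t-1}/\delta\rceil+1$. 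Your first-moment argument is cruder but more elementary, avoids the Poisson approximation lemma entirely, and handles general $T$ directly (the paper details only $T=2$); the paper's route gives finer, essentially distributional (near-Poisson) control of $|\C_{t}^{c}|$ in a form that plugs directly into the integrals \eqref{eq:PCk}--\eqref{eq:replace_with_largeNvalues} used in Section \ref{sub:Proof-of-inequality}. Two small remarks: your event $\ev_{t}^{(2)}$ is redundant, since the two systems coincide on $\Ctm$ so $\check{\xi}_{\Ctm}(G_{t-1})=\xi_{\Ctm}(G_{t-1})$ — dropping it would give the stronger bound $(1-\delta)^{2T}$, which of course still implies the claim — and you should state the trivial base case $|\C_{0}^{c}|=1$ with the time-$0$ coupled particles i.i.d.\ from $m_{0}$ to start the induction.
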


\begin{pf} Let
$\omega_{t}=\lambda_{t}\wedge\check{\lambda}_{t}$
and recall that $\omega_{t}$ is the probability (conditional on $Z_{t}^{1:N})$
that $n\in\mathcal{C}_{t+1}$, that is, that particles $Z_{t+1}^{n}$
and $\check
{Z}_{t+1}^{n}$
are coupled. Thus, and using the fact that
the particle system is exchangeable, one has
%
%e11 #&#
\begin{eqnarray}
\label{eq:PCk} &&\P \bigl(\llvert \C_{1}\rrvert
=N-k_{1},\ldots ,\llvert \mathcal{C}_T\rrvert
=N-k_{T} \bigr)
\nonumber
\\
&&\quad = \Biggl\{ \prod_{t=1}^{T}\pmatrix{N-1 \cr
N-k_{t}} \Biggr\} \P \bigl(\C _{1}^{c}=
(1\dvtx k_{1} ),\ldots,\mathcal{C}_T^{c}=
(1\dvtx k_{T} ) \bigr)
\nonumber
\\[-8pt]
\\[-8pt]
\nonumber
&&\quad =\int\prod_{t=0}^{T-1}\pmatrix{N-1 \cr
N-k_{t+1}} \Biggl\{ \prod_{n=2}^{k_{t}}
\kappa_{t}\bigl(\mathrm{d}z_{t}^{n},\mathrm{d}\check{z}_{t}^{n}
\bigr)\prod_{n=k_{t}+1}^{N}\mu_{t}
\bigl(\mathrm{d}z_{t}^{n}\bigr) \Biggr\} (1-\omega
_{t})^{(k_{t+1}-1)}\omega_{t}^{(N-k_{t+1})}
\\
&&\quad \geq\int\prod_{t=0}^{T-1} \Biggl\{ \prod
_{n=2}^{k_{t}}\kappa _{t}
\bigl(\mathrm{d}z_{t}^{n},\mathrm{d}\check{z}_{t}^{n}
\bigr)\prod_{n=k_{t}+1}^{N}\mu _{t}
\bigl(\mathrm{d}z_{t}^{n}\bigr) \Biggr\} \biggl\{ \frac{(N-1)!(1-\omega
_{t})^{(k_{t+1}-1)}}{(N-k_{t+1})!}
\biggr\} \biggl\{ \frac{\omega
_{t}^{(N-k_{t+1})}}{(k_{t+1}-1)!} \biggr\}\mathbb{I}_{\ev_{t}}\nonumber
\end{eqnarray}
with the convention that $k_{0}=1$, that $\mu_{0}(\mathrm{d}z_{0})=m_{0}(\mathrm{d}z_{0})$,
and that empty products equal one, and defining the event $\mathcal{A}_{t}$
as
%
%e12 #&#
\begin{equation}
\ev_{t}= \biggl\{ \frac{\xi_{\mathcal{C}_t}(G_{t})}{|\mathcal
{C}_t|}\geq\frac{\mu
_{t}(G_{t})}{2} \biggr
\}.\label{eq:Fevents}
\end{equation}
Note that the two integrals above are with respect to a joint distribution
which corresponds to a chain rule decomposition that works forward in time:
$\kappa_1$ and $\mu_1$ are distributions conditional on $Z_0^{1:N}$ and
so on.
In addition, these chained conditional distributions are
such that $\llvert \mathcal{C}_t\rrvert =N-k_{t}$ with probability one.\

For the sake of transparency, we complete the proof for $T=2$, but
we note that exactly the same steps employed may be extended to the general
case where $T>2$.

The key idea is to replace the two factors in the integrand of \eqref{eq:PCk}
with their large $N$ values which we now define. Let
%
%e13 #&#
\begin{equation}
\Lambda_{t}=|\mathcal{C}_t|\times\frac{\xi_{\mathcal
{C}_t^{c}}(G_{t})\vee\check{\xi}_{\mathcal{C}_t
^{c}}(G_{t})}{\xi_{\mathcal{C}_t}(G_{t})}\qquad
\mbox{for }|\mathcal {C}_t|>0\label{eq:capLamda}
\end{equation}
and set $\Lambda_{t}=0$ if $|\mathcal{C}_t|=0$. Using Lemma~\ref
{lem:BloodyLemma},
stated and proved at the end of this section, one has, for fixed $k_{1}$,
$k_{2}$, $\delta$, and $N$ large enough, that the integral in \eqref{eq:PCk}
is larger than
%
%e14 #&#
\begin{eqnarray}\label{eq:replace_with_largeNvalues}
&&(1-\delta)^{2}\int_{\ev_{0}} \Biggl\{ \prod
_{n=2}^{N}\mu _{0}\bigl(\mathrm{d}z_{0}^{n}
\bigr) \Biggr\} \frac{\mathrm{e}^{-\Lambda_{0}}\Lambda
_{0}^{k_{1}-1}}{(k_{1}-1)!}
\nonumber
\\[-8pt]
\\[-8pt]
\nonumber
&&\quad{}\times \int_{\ev_{1}} \Biggl\{ \prod
_{n=2}^{k_{1}}\kappa_{1}
\bigl(\mathrm{d}z_{1}^{n},\mathrm{d}\check{z}_{1}^{n}
\bigr)\prod_{n=k_{1}+1}^{N}\mu _{1}
\bigl(\mathrm{d}z_{1}^{n}\bigr) \Biggr\} \frac{\mathrm{e}^{-\Lambda_{1}}\Lambda
_{1}^{k_{2}-1}}{(k_{2}-1)!}.
\end{eqnarray}
We now explain how to choose $L_{1}$, $L_{2}$ such that, for $N$
large enough,
%
%e15 #&#
\begin{eqnarray}\label
{eq:howtochooseL1L2}
&&\int_{\ev_{0}} \Biggl\{ \prod_{n=2}^{N}
\mu_{0}\bigl(\mathrm{d}z_{0}^{n}\bigr) \Biggr\} \sum
_{k_{1}=1}^{L_{1}}\frac{\mathrm{e}^{-\Lambda_{0}}\Lambda
_{0}^{k_{1}-1}}{(k_{1}-1)!}
\nonumber
\\[-8pt]
\\[-8pt]
\nonumber
&&\quad{}\times\int_{\ev_{1}} \Biggl\{ \prod
_{n=2}^{k_{1}}\kappa _{1}
\bigl(\mathrm{d}z_{1}^{n},\mathrm{d}\check{z}_{1}^{n}
\bigr)\prod_{n=k_{1}+1}^{N}\mu _{1}
\bigl(\mathrm{d}z_{1}^{n}\bigr) \Biggr\} \sum
_{k_{2}=1}^{L_{2}}\frac{\mathrm{e}^{-\Lambda
_{1}}\Lambda_{1}^{k_{2}-1}}{(k_{2}-1)!}\geq(1-
\delta)^{4}.
\end{eqnarray}

First note that, given Assumption \hyperref[assG]{(G)}, and since $\llvert \C_{0}\rrvert =N-1$,
$\llvert \C_{1}\rrvert =N-k_{1}$ (with probability one under the conditional
distribution that appears in \eqref{eq:PCk}, for $t=1$, as explained
above), and since $\xi_{\mathcal{C}_t}(G_{t})\geq\llvert \mathcal
{C}_t\rrvert \mu_{t}(G_{t})/2$
(by event $\mathcal{A}_t$), one has (again with probability one under the
same conditional distribution):
%
%e16 #&#
\begin{equation}
0\leq\Lambda_{0}\times\mathbb{I}_{\mathcal{A}_{0}}\leq
2g_{0}^{2}\times \mathbb{I}_{\mathcal{A}_{0}},\qquad 0\leq
\Lambda_{1}\times\mathbb {I}_{\mathcal{A}_{1}}\leq2g_{1}^{2}k_{1}
\times\mathbb{I}_{\mathcal
{A}_{1}}\label{eq:bound_Lambdat}
\end{equation}
for $N>k_{1}$ (otherwise the probability that $
\llvert \C_{1}\rrvert =N-k_{1}$
would be zero). Choose $L_{1}$, then $L_{2}$, such that
\begin{eqnarray*}
\sum_{k_{1}=1}^{L_{1}}\frac
{\mathrm{e}^{-2g_{0}^{2}}(2g_{0}^{2})^{k_{1}-1}}{(k_{1}-1)!}&\geq&1-
\delta, \\
\sum_{k_{2}=1}^{L_{2}}\frac
{\mathrm{e}^{-(2g_{1}^{2}L_{1})}(2g_{1}^{2}L_{1})^{k_{2}-1}}{(k_{2}-1)!}
&\geq& 1-\delta.
\end{eqnarray*}
Since $x\rightarrow \mathrm{e}^{-x}\sum_{k=0}^{L}x^{k}/k!$ is a decreasing
function for $x>0$, this choice of $L_{2}$ ensures that
\begin{eqnarray*}
&&\int_{\ev_{1}} \Biggl\{ \prod_{n=2}^{k_{1}}
\kappa _{1}\bigl(\mathrm{d}z_{1}^{n},\mathrm{d}\check
{z}_{1}^{n}\bigr)\prod_{n=k_{1}+1}^{N}
\mu_{1}\bigl(\mathrm{d}z_{1}^{n}\bigr) \Biggr\} \sum
_{k_{2}=1}^{L_{2}}\frac{\mathrm{e}^{-\Lambda_{1}}\Lambda
_{1}^{k_{2}-1}}{(k_{2}-1)!}
\\
&&\quad\geq(1-\delta)\int_{\ev_{1}} \Biggl\{ \prod
_{n=2}^{k_{1}}\kappa _{1}
\bigl(\mathrm{d}z_{1}^{n},\mathrm{d}\check{z}_{1}^{n}
\bigr)\prod_{n=k_{1}+1}^{N}\mu _{1}
\bigl(\mathrm{d}z_{1}^{n}\bigr) \Biggr\}.
\end{eqnarray*}

By Hoeffding's inequality (in the same way as in the previous section),
\begin{eqnarray*}
\int_{\ev_{1}^{c}} \Biggl\{ \prod_{n=k_{1}+1}^{N}
\mu _{1}\bigl(\mathrm{d}z_{1}^{n}\bigr) \Biggr\} &\leq&\exp
\biggl\{-2(N-k_{1})\frac{\mu
_{1}(G_{1})^{2}}{4g_{1}^{2}} \biggr\} \\
&\leq&\exp \biggl
\{-(N-k_{1})\frac{g_{1}^{-4}}{2} \biggr\},
\end{eqnarray*}
where the last inequality follows from Assumption \hyperref[assG]{(G)}. Using the same
calculations for the first integral in \eqref{eq:howtochooseL1L2}
(i.e., applying Hoeffding's inequality to $\ev_{0}$ again in the
same way),
one obtains eventually
\begin{eqnarray*}
%\MoveEqLeft
&&\int_{\ev_{0}} \Biggl\{ \prod
_{n=2}^{N}\mu_{0}\bigl(\mathrm{d}z_{0}^{n}
\bigr) \Biggr\} \sum_{k_{1}=1}^{L_{1}}
\frac{\mathrm{e}^{-\Lambda_{0}}\Lambda
_{0}^{k_{1}-1}}{(k_{1}-1)!}\\
&&\qquad{} \times\int_{\ev_{1}} \Biggl\{ \prod
_{n=2}^{k_{1}}\kappa_{1}\bigl(\mathrm{d}z_{1}^{n},\mathrm{d}
\check{z}_{1}^{n}\bigr)\prod_{n=k_{1}+1}^{N}
\mu_{1}\bigl(\mathrm{d}z_{1}^{n}\bigr) \Biggr\} \sum
_{k_{2}=1}^{L_{2}}\frac
{\mathrm{e}^{-\Lambda_{1}}\Lambda_{1}^{k_{2}-1}}{(k_{2}-1)!}
\\
&&\quad \geq(1-\delta)^{2} \biggl[1-\exp \biggl\{-(N-L_{1})
\frac{g_{1}^{-4}}{2} \biggr\} \biggr] \biggl[1-\exp \biggl\{-(N-1)\frac{g_{0}^{-4}}{2}
\biggr\} \biggr]
\\
&&\quad \geq(1-\delta)^4
\end{eqnarray*}
for $N$ large enough and, therefore,
combining this with \eqref{eq:replace_with_largeNvalues},
one may conclude that
\[
\sum_{k_{1}=1}^{L_{1}}\sum
_{k_{2}=1}^{L_{2}}\P\bigl(\llvert \C _{1}\rrvert
=N-k_{1},\llvert \C_{2}\rrvert =N-k_{2}\bigr)
\geq(1-\delta)^{6}
\]
provided $N$ is taken to be large enough.
\end{pf}

To conclude the proof, we state and prove the following lemma, which
we used in the proof above in order to replace the two last factors
in \eqref{eq:PCk} by their large-$N$ values.

%le5 #&#
\begin{lem}
\label{lem:BloodyLemma}Assume \textup{\hyperref[assG]{(G)}}. For any given $\delta>0$ and
positive integers $k_{1},\ldots,k_{T}$, there exists a positive integer
$N_{0}$ such that the following inequalities hold for all $N\geq N_{0}$ and $x_{0:T}$, $\check{x}_{0:T}\in\setX^{T+1}$:
\begin{eqnarray*}
\omega_{t}^{(N-k_{t+1})}\exp(\Lambda_{t})\times\mathbb
{I}_{\mathcal
{A}_{t}} & \geq&(1-\delta)\times\mathbb{I}_{\mathcal{A}_{t}},
\\
\frac{(N-1)!}{(N-k_{t+1})!} (1-\omega_{t} )^{(k_{t+1}-1)}\frac
{1}{\Lambda_{t}^{k_{t+1}-1}}
\times\mathbb{I}_{\mathcal{A}_{t}} & \geq& (1-\delta)\times\mathbb{I}_{\mathcal{A}_{t}}.
\end{eqnarray*}
\end{lem}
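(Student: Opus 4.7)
The plan is to reduce both inequalities to the behaviour of a single parameter $\alpha_t$ that vanishes like $1/N$ on the event $\mathcal{A}_t$, and then apply elementary asymptotic expansions. Set
\[
\alpha_t=\frac{\xi_{\Ct^c}(G_t)\vee\check{\xi}_{\Ct^c}(G_t)}{\xi_{\Ct}(G_t)},
\]
so that $\omega_t=1/(1+\alpha_t)$ and $\Lambda_t=(N-k_t)\alpha_t$. On $\ev_t$, the definition of $\ev_t$ gives $\xi_{\Ct}(G_t)\geq(N-k_t)\mu_t(G_t)/2$, and by Assumption (G) we have $\mu_t(G_t)\geq 1/g_t$ and $\xi_{\Ct^c}(G_t)\vee\check{\xi}_{\Ct^c}(G_t)\leq k_t g_t$. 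Hence on $\ev_t$,
\[
0\leq\alpha_t\leq\frac{2k_t g_t^{2}}{N-k_t},
\]
which is $O(1/N)$ when $k_t$ is held fixed.

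For the first inequality, take logarithms and expand: on $\ev_t$,
\begin{align*}
\log\bigl(\omega_t^{N-k_{t+1}}\exp(\Lambda_t)\bigr) & =-(N-k_{t+1})\log(1+\alpha_t)+(N-k_t)\alpha_t\\
 & =(k_{t+1}-k_t)\alpha_t+\tfrac{1}{2}(N-k_{t+1})\alpha_t^{2}+O(N\alpha_t^{3}).
\end{align*}
Since $\alpha_t=O(1/N)$, each of the three terms on the right-hand side is $O(1/N)$ in absolute value, uniformly over the realisations. Therefore, for $N$ large enough (depending only on $\delta$, $T$, $g_{0:T}$ and $k_{1:T}$), this logarithm exceeds $\log(1-\delta)$ on $\ev_t$, which proves the first inequality.

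For the second inequality, observe the algebraic simplification
\[
\frac{1-\omega_t}{\Lambda_t}=\frac{\alpha_t/(1+\alpha_t)}{(N-k_t)\alpha_t}=\frac{\omega_t}{N-k_t},
\]
so the quantity to bound below becomes
\[
\omega_t^{k_{t+1}-1}\prod_{j=1}^{k_{t+1}-1}\frac{N-j}{N-k_t}.
\]
With $k_{t+1}$ and $k_t$ fixed, each factor $(N-j)/(N-k_t)\to 1$ as $N\to\infty$, and $\omega_t=1/(1+\alpha_t)\to 1$ uniformly on $\ev_t$ by the $O(1/N)$ bound on $\alpha_t$. Hence the product exceeds $1-\delta$ for $N$ large enough.

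The only delicate point is the first inequality. The exact cancellation of the leading $O(\alpha_t)$ contributions between $-(N-k_{t+1})\log(1+\alpha_t)$ and $\Lambda_t=(N-k_t)\alpha_t$ is precisely what the definition of $\Lambda_t$ is designed to achieve; without this cancellation the product $\omega_t^{N-k_{t+1}}\exp(\Lambda_t)$ would fail to tend to $1$. Once this structural observation is in place, everything else reduces to controlling $\alpha_t$ via Assumption (G), which is routine.
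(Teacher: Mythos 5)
Your proof is correct and follows essentially the same route as the paper: both rest on the identity $(1-\omega_t)/\omega_t=\Lambda_t/(N-k_t)$ (your $\alpha_t$), the bound $\alpha_t\leq 2k_tg_t^2/(N-k_t)$ on $\mathcal{A}_t$ coming from Assumption (G), and an elementary expansion of $\log(1+\alpha_t)$ exploiting the cancellation of the leading term against $\Lambda_t$. Your explicit rewriting of the second quantity as $\omega_t^{k_{t+1}-1}\prod_{j=1}^{k_{t+1}-1}\frac{N-j}{N-k_t}$ is a nice touch, since the paper only asserts that this case goes ``along the same lines''.
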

\begin{pf}
Given that $\llvert \mathcal{C}_t\rrvert =N-k_{t}$ and the respective
definitions
of $\omega_{t}$ and $\Lambda_{t}$, one has
\[
\frac{1-\omega_{t}}{\omega_{t}}=\frac{\Lambda_{t}}{N-k_{t}}
\]
and, therefore, for $N$ large enough and $k_{t}$ fixed, and conditional
on $\mathbb{I}_{\mathcal{A}_{t}}=1$, the probability $1-\omega_{t}$
may be made arbitrarily small, given that $\Lambda_{t}\mathbb
{I}_{\mathcal{A}_{t}}$
is a bounded quantity; see \eqref{eq:bound_Lambdat}. Since $\log
(1+x)\geq x-x^{2}$
for $x\geq-1/2$, one has, for $N$ large enough (so that $x=\omega
_{t}-1\geq-1/2$),
and conditional on $\mathbb{I}_{\mathcal{A}_{t}}=1$,
\[
\Lambda_{t}+(N-k_{t+1})\log\omega_{t}\geq
\Lambda_{t} \biggl\{ 1-\frac
{N-k_{t+1}}{N-k_{t}}\omega_{t}-
\frac{N-k_{t+1}}{ (N-k_{t}
)^{2}}\Lambda_{t}\omega_{t}^{2}
\biggr\},
\]
which can be clearly made arbitrarily small (in absolute value) by taking
$N$ large enough, since both $\omega_{t}$ and $\Lambda_{t}$ are
bounded quantities. The second inequality may be proved along the
same lines.
\end{pf}

%s4 #&#
\section{Backward sampling}
\label{sec:Backward-sampling}

This section discusses the backward sampling (BS) step proposed by
\cite{Whiteley_disc_PMCMC}
so as to improve the mixing of particle Gibbs. It is convenient in this
section to revert to standard notation based
on the initial process $X_{t}$, rather than on notation based on
trajectories $Z_{t}=X_{0:t}$. Thus, we now consider the following
(extended) invariant distribution for the CPF kernel
%
%e17 #&#
\begin{eqnarray}\label
{eq:defpi_xt_notations}
&&\pi_{T}^{N} \bigl(\mathrm{d}x_{0:T}^{1:N},\mathrm{d}a_{0:T-1}^{1:N},\mathrm{d}n^\star
\bigr)\nonumber\\
 &&\quad =\frac
{1}{\mathcal{Z}_T}m_{0}^{\otimes N}\bigl(\mathrm{d}x_{0}^{1:N}
\bigr)
\nonumber
\\[-8pt]
\\[-8pt]
\nonumber
& &\qquad{}\times\prod_{t=1}^{T}
\Biggl\{ \Biggl[\frac{1}{N}\sum_{n=1}^{N}G_{t-1}
\bigl(x_{t-1}^{n}\bigr) \Biggr] \prod
_{n=1}^{N} \bigl[ W_{t-1}^{a_{t-1}^n}
\bigl(x_{t-1}^{1:N}\bigr)\,\mathrm{d}a_{t-1}^n
m_{t}\bigl(x_{t-1}^{a_{t-1}^{n}},\mathrm{d}x_{t}^{n}
\bigr) \bigr] \Biggr\}
\\
& &\qquad{}\times\frac{1}{N}G_{T}\bigl(x_{T}^{n^\star}
\bigr),
\nonumber
\end{eqnarray}
where $W_t^n(x_t^{1:N})=G_t(x_t^n)/\sum_{m=1}^N G_t(x_t^m)$;
compared to \prettyref{eq:defpi}, this equation represents a simple
change of variables.

In this new set of notation, the $n$th trajectory $Z_{T}^{n}$ becomes
a deterministic function of the particle system
$(X_{0:T}^{1:N},A_{0:T-1}^{1:N})$,
that may be defined as follows: $Z_{T}^{n}=(X_{0}^{B_{0}^{n}},\ldots
,X_{T}^{B_{T}^{n}})$,
where the indexes $B_{T}^{n}$'s are defined recursively as: $B_{T}^{n}=n$,
$B_{t}^{n}=A_{t}^{B_{t+1}^{n}}$, for $t\in0\dvtx (T-1)$. Similarly, as
noted before, $Z_{T}^{\star}=Z_{T}^{N^{\star}}=(X_{0}^{B_{0}^{\star
}},\ldots,X_{T}^{B_{T}^{\star}})$,
with $B_{T}^{\star}=N^{\star}$, $B_{t}^{\star}=A_{t}^{B_{t+1}^{\star}}$,
that is, $Z_{T}^{\star}$ is a deterministic function of
$(X_{0:T}^{1:N},A_{0:T-1}^{1:N},N^{\star})$.

Whiteley \cite{Whiteley_disc_PMCMC}, in his discussion of \cite{PMCMC}
(see also \cite{Lindsten2012}), suggested to add the following BS
(backward sampling) step to a particle Gibbs update.
\begin{enumerate}[{CPF-3}]
\item[{CPF-3}] Let $B_{T}^{\star}=N^{\star}$, then, recursively for $t=T-1$
to $t=0$, sample index $B_{t}^{\star}=A_{t}^{B_{t+1}^{\star}}\in1\dvtx N$,
conditionally on $B_{t+1}^{\star}=b$, from the distribution
%
%e18 #&#
\begin{eqnarray}\label{eq:BS_distr}
&&\pi_{T}^{N} \bigl(A_{t}^{b}=a_{t}^{b}|X_{0:T}^{1:N}=x_{0:T}^{1:N},
A_{t}^{-b}=a_{t}^{-b},
A_{0:t-1}^{1:N}=a_{0:t-1}^{1:N},
A_{t+1:T}^{1:N}=a_{t+1:T}^{1:N},N^{\star}=n^{\star}
\bigr)
\nonumber
\\[-3pt]
\\[-6pt]
\nonumber
&&\quad\propto W_t^{a_t^b} m_{t+1}\bigl(x_{t}^{a_{t}^{b}},x_{t+1}^{b}
\bigr)
\end{eqnarray}
and set $X_{t}^{\star}=x_{t}^{b_{t}^{\star}}$. (Recall that
$Z_{T}^{\star}=(X_{0}^{B_{0}^{\star}},\ldots,X_{T}^{B_{T}^{\star}})$.)
\end{enumerate}
In \prettyref{eq:BS_distr}, $A_{t}^{-b}$ is $A_{t}^{1:N}$ minus
$A_{t}^{b}$, $a_{t}^{-b}$ is defined similarly, and
$m_{t+1}(x_{t}^{a_{t}^{b}},x_{t+1}^{b})$ is the probability
density (relative to measure $\mathrm{d}x_{t+1}$) of conditional distribution
$m_{t+1}(x_{t}^{a_{t}^{b}},\mathrm{d}x_{t+1})$ evaluated at point $x_{t+1}^{b}$.

This extra step amounts to update the ancestral lineage of the selected
trajectory up to time $t$, recursively in time, from $t=T-1$ to
time $t=0$. It is straightforward to show that \prettyref{eq:BS_distr}
is the conditional distribution of random variable $B_{t}^{\star
}=A_{t}^{B_{t+1}^{\star}}$,
conditional on $B_{t+1}^{\star}=b$\vspace*{1.5pt} and the other auxiliary variables
of the particle system, relative to the joint distribution \prettyref
{eq:defpi_xt_notations}.
As such, this extra step leaves $\mathbb{Q}_{T}(\mathrm{d}x_{0:T})$ invariant.

It should be noted that the BS step may be implemented only when the
density $m_{t+1}(x_{t},x_{t+1})$ admits an explicit expression, which
is unfortunately not the case for several models of practical interest.
Finally, Remark~\ref{rem:Equivalent}
also applies to the CPF-BS kernel.
%
%re2 #&#
\begin{remark}
\label{rem:CPFimageWithBS}Let {{$P_{T}^{N,B}$}}
denote the CPF-BS (CPF with
backward sampling) Markov kernel. Then the image of {{$z_{T}^{\star
}\in
\mathcal{X}^{T+1}$}}
under {{$P_{T}^{N,B}$}} is unchanged
by the choice of {{$(b_{0:T-1}^{\star},n^{\star})$}}
for the realization of {{$(B_{0:T-1}^{\star},N^{\star})$}}
in the initialization, that is, step CPF-1.
\end{remark}

%s4.1 #&#
\subsection{Reversibility, covariance ordering and asymptotic efficiency}
\label{sub:Reversibility}

To compare PG with backward sampling with PG without backward sampling,
one might be tempted to use Peskun ordering. The following counter-example
shows that unfortunately the former does not dominate the latter in
the Peskun sense.

%ex6 #&#
\begin{example}
Let $\mathcal{X}=\mathbb{R}$, $(X_{t})_{t\geq0}$ be an i.i.d. sequence
with marginal law $m(\mathrm{d}x)$, and let the potentials be unit valued,
that is, $G_{t}(x_{t})=1$ for all $t$. Then one can show that the CPF
kernel with backward sampling does not dominate the CPF kernel in
Peskun sense. For example, let $\mathcal{B}=
\mathcal{B}_{0}\times\cdots\times\mathcal{B}_{T}\subset\mathcal
{X}^{T+1}$,
where $m(\mathcal{B}_{t})=\varepsilon$ for all $t$. If we choose a
reference trajectory $x_{0:T}\notin\mathcal{B}$ but $x_{t}\in
\mathcal{B}_{t}$
for $t\neq T$, then it is easy to show that (e.g., when $T=2$ and
$N=2$) that the probability of hitting $\mathcal{B}$ when starting
from $x_{0:T}$, that is, $P_{T}^{N}(x_{0:T},\mathcal{B})$, is higher
without backward sampling than with it. In this example, a chosen
trajectory that coalesces with the reference trajectory has more chance
of hitting set $\mathcal{B}$.
\end{example}

One does observe in practice that Backward sampling (BS) brings improvement
to the decay of the autocorrelation function of successive samples
of $X_{0:T}$ generated by the particle Gibbs sampler, that is, more rapid
decay compared to not implementing BS; see \cite{Lindsten2012} and
our numerical experiments in \prettyref{sec:Numerical-experiments}.
However, how much improvement depends on the transition kernel \emph
{$m_{t}(x_{t-1},\mathrm{d}x_{t})$}
of the hidden state process $ (X_{t} )_{t\geq0}$. If only
$X_{0}\sim m_{0}$ is random while \emph{$m_{t}(x_{t-1},\mathrm{d}x_{t})$}
is a point mass at $x_{t-1}$ for \emph{$t\geq1$}, then it is clear
that BS will bring no improvement. We can however prove that, regardless
of $m_{t}$, the empirical average of the successive samples from
a CPF kernel with BS will have an asymptotic variance no larger than
the asymptotic variance of the empirical average of successive samples
from the corresponding CPF kernel without BS. The asymptotic variance
here is the variance of the limiting Gaussian distribution characterised
by the usual $\sqrt{n}$-central limit theorem (CLT) for Markov chains;
see, for example, \cite{roberts2004general}.

The following result due to \cite{tierney1998note} (see also \cite{MiraGeyer1999}),
formalises this comparison, or ordering, of two Markov transition
kernels having the same stationary measure via the asymptotic variance
given by the CLT. We call this efficiency ordering.

%th7 #&#
\begin{thmm}[(\cite{tierney1998note})]
\label{thmm:tier98} For $\xi_{0},\xi
_{1},\ldots$
successive samples from a reversible Markov transition kernel $H$
(on some general state space) with stationary measure $\pi$, where
$\xi_{0}\sim\pi$, and for $f\in L^{2}(\pi)= \{ f\dvtx \int\pi
(\mathrm{d}x)f(x)^{2}<\infty \} $
let
\[
v(f,H)=\lim_{n\rightarrow\infty}\frac{1}{n}\operatorname {Variance} \Biggl(\sum
_{i=0}^{n-1}f(\xi_{i}) \Biggr).
\]
Let $P_{1}$ and $P_{2}$ be two reversible Markov kernels with stationary
measure $\pi$ such that
$\E_{\pi\otimes P_{1}} \{ g(\xi_{0})g(\xi_{1}) \} \leq
\E_{\pi
\otimes P_{2}} \{ g(\xi_{0})g(\xi_{1}) \} $
for all $g\in L^{2}(\pi)$. Then $v(f,P_{1})\leq v(f,P_{2})$ for
all $f\in L^{2}(\pi)$ and $P_{1}$ is said to dominate $P_{2}$ in
\emph{efficiency ordering}.
\end{thmm}

Note that in the original version of this theorem by \cite{tierney1998note}
the requirement on $P_{1}$ and $P_{2}$ for $v(f,P_{1})\leq v(f,P_{2})$
for all $f\in L^{2}(\pi)$ is that $P_{1}$ dominates $P_{2}$ in
Peskun ordering. However, Tierney's proof actually makes use of the
weaker Peskun implied property of lag-1 domination instead, as also
noted in Theorem~4.2 of \cite{MiraGeyer1999}.

To prove efficiency ordering, we must prove first that the CPF kernel,
with or without BS, is reversible. Following reversibility, we then
need to show that the CPF kernel with BS has smaller lag 1 autocorrelation
compared to the CPF kernel without BS; this property if holds is called
\emph{lag-one domination}.

%pr8 #&#
\begin{prop}
The CPF kernel is reversible. \label{prop:reversibility_CPF}
\end{prop}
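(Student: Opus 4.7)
The plan is to exhibit the CPF kernel as a conditional distribution under an augmented law that is manifestly symmetric under a swap of two auxiliary indices; reversibility then follows immediately from that symmetry.

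First, I would define, on the support of $\pi_T^N$ augmented by an extra discrete variable $M\in 1{:}N$, the joint law
\[
\bar{\pi}(dz_T^{1:N},da_{0:T-1}^{1:N},dn^{\star},dM)=\pi_T^N(dz_T^{1:N},da_{0:T-1}^{1:N},dn^{\star})\cdot W_T^{M}(z_T^{1:N}),
\]
so that $M$ is conditionally independent of $N^{\star}$ given the particle system, with the same conditional law $\mathcal{M}(W_T^{1:N})$ as $N^{\star}$ has under $\pi_T^N$. By construction $\bar{\pi}$ is invariant under the swap $N^{\star}\leftrightarrow M$, so the joint distribution of $(Z_T^{N^{\star}},Z_T^{M})$ under $\bar{\pi}$ is symmetric; in particular, by Proposition \ref{prop:2}, both $Z_T^{N^{\star}}$ and $Z_T^{M}$ have marginal law $\mathbb{Q}_T$.

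The principal step is to identify the conditional $\bar{\pi}(Z_T^{M}\in\cdot\mid Z_T^{N^{\star}}=z)$ with the CPF kernel $P_T^N(z,\cdot)$. The CPF algorithm started from $z$ with the conventional initialization $(b_{0:T-1}^{\star},n^{\star})=(1,\ldots,1)$ samples the remaining particle system from $\pi_T^N(\cdot\mid z_T^{1}=z,b_{0:T-1}^{\star}=1,n^{\star}=1)$ in step CPF-1 and then outputs $Z_T^{M}$ with $M\sim\mathcal{M}(W_T^{1:N})$ in step CPF-2. Under $\pi_T^N$, the ancestry $(B_{0:T-1}^{\star},N^{\star})$ is marginally uniform on $(1{:}N)^{T+1}$ and independent of $Z_T^{N^{\star}}$, so conditioning on $Z_T^{N^{\star}}=z$ leaves that uniform law on $(B_{0:T-1}^{\star},N^{\star})$ intact. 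By Proposition \ref{prop:Equivalent}, the law of the CPF output is insensitive to the ancestry fixed at initialization, so averaging the CPF-1 conditional over a uniform $(b_{0:T-1}^{\star},n^{\star})$ reproduces exactly the conditional law of $\pi_T^N$ given $Z_T^{N^{\star}}=z$ on the particle system; appending the independent draw $M\sim\mathcal{M}(W_T^{1:N})$ yields the stated conditional of $\bar{\pi}$.

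Once this identification is in hand, reversibility is immediate: for any measurable $A,B\subset\mathcal{X}^{T+1}$,
\[
\int_{B}\mathbb{Q}_T(dz)P_T^N(z,A)=\bar{\pi}(Z_T^{N^{\star}}\in B,Z_T^{M}\in A)=\bar{\pi}(Z_T^{N^{\star}}\in A,Z_T^{M}\in B)=\int_{A}\mathbb{Q}_T(dz)P_T^N(z,B),
\]
where the central equality invokes the $N^{\star}\leftrightarrow M$ symmetry of $\bar{\pi}$. The main obstacle is the identification in the previous paragraph: it relies crucially on Proposition \ref{prop:Equivalent}, hence on the cycle-invariance part (Rb) of Assumption (R). Without cycle invariance, the CPF kernel would depend on the arbitrary choice of the initial ancestry and could not be matched with a symmetric conditional of any natural augmentation of $\pi_T^N$.
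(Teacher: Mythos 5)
Your proof is correct and is essentially the paper's own argument in different packaging: the augmented law $\bar{\pi}$ with the extra index $M$ drawn from $\mathcal{M}(W_T^{1:N}(z_T^{1:N}))$ is exactly the measure the paper integrates against, and your identification of $\bar{\pi}(Z_T^{M}\in\cdot\mid Z_T^{N^\star}=z)$ with $P_T^N(z,\cdot)$ via Proposition \ref{prop:Equivalent} together with the uniform, independent ancestry in \eqref{eq:pi_TN_reformulated} is precisely the paper's first equality. The swap symmetry of the two conditionally i.i.d. index draws $(N^\star,M)$ is the paper's exchange of $n^\star$ and $\check{n}^\star$, so the two proofs coincide in substance.
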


\begin{pf}
This result is based on the equivalent representation of the CPF kernel
described in Section~\ref{sec:PG_framework} (see Remark~\ref
{rem:Equivalent}) which regenerates both
the labels $(B_{0:T-1}^\star,N^\star)$ of the frozen trajectory, and
the $N-1$ remaining
trajectories $(X_{0:T}^{1:N\setminus\star},A_{0:T-1}^{1:N\setminus
\star})$. Consider a measurable function $h\dvtx \setX^{T+1}\times\setX
^{T+1}\rightarrow\mathbb{R}$
and let $(Z_{T}^{\star},\check{Z}_{T}^{\star})=(X_{0:T}^{\star
},\check
{X}_{0:T}^{\star})\sim\mathbb{Q}_{T}\otimes P_{T}^{N}$
then
\begin{eqnarray*}
\E \bigl\{ h\bigl(Z_{T}^{\star},\check{Z}_{T}^{\star}
\bigr) \bigr\} & = & \int \mathbb{Q}_{T}\bigl(\mathrm{d}x_{0:T}^{\star}
\bigr)\int\pi _{T}^{N}\bigl(\mathrm{d}x_{0:T}^{1:N\setminus
\star},\mathrm{d}a_{0:T-1}^{1:N\setminus\star},\mathrm{d}b_{0:T-1}^{\star},\mathrm{d}n^{\star
}|x_{0:T}^{\star}
\bigr)
\\
& &{} \times\int\pi_{T}^{N} \bigl(\mathrm{d}\check{n}^{\star
}|x_{0:T}^{1:N},a_{0:T-1}^{1:N}
\bigr)h\bigl(z_{T}^{n^{\star
}},z_{T}^{\check
{n}^{\star}}
\bigr)
\\
& = & \int\pi_{T}^{N}\bigl(\mathrm{d}x_{0:T}^{1:N},\mathrm{d}a_{0:T-1}^{1:N},\mathrm{d}n^{\star
}
\bigr)\int \pi_{T}^{N}\bigl(\mathrm{d}\check{n}^{\star
}|x_{0:T}^{1:N},a_{0:T-1}^{1:N}
\bigr)h\bigl(z_{T}^{n^{\star}},z_{T}^{\check
{n}^{\star}}\bigr)
\\
& = & \int\pi_{T}^{N}\bigl(\mathrm{d}x_{0:T}^{1:N},\mathrm{d}a_{0:T-1}^{1:N},\mathrm{d}
\check {n}^{\star
}\bigr)\int\pi_{T}^{N}
\bigl(\mathrm{d}n^{\star
}|x_{0:T}^{1:N},a_{0:T-1}^{1:N}
\bigr)h\bigl(z_{T}^{n^{\star}},z_{T}^{\check
{n}^{\star}}\bigr)
\\
& = & \E \bigl\{ h\bigl(\check{Z}_{T}^{\star},Z_{T}^{\star}
\bigr) \bigr\},
\end{eqnarray*}
where the second equality uses Remark~\ref{rem:Equivalent}. We have
also used a change of variables, that is, $z_{T}^{n^{\star}}$,
respectively, $z_{T}^{\check{n}^{\star}}$, must be understood as a certain
deterministic function of $(x_{0:T}^{1:N},a_{0:T-1}^{1:N},n^{\star})$,
respectively, $(x_{0:T}^{1:N},a_{0:T-1}^{1:N},\check{n}^{\star})$, in the
equations above; see notation at the beginning of this section,
specifically in the paragraph before step CPF-3.
\end{pf}
We now prove a similar result for CPF-BS, the CPF kernel with backward
sampling.

%pr9 #&#
\begin{prop}
\label{prop:BSreversible}
The CPF-BS kernel (CPF with
backward sampling) is reversible.
\end{prop}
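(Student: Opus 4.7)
The proof follows the template of the proof of Proposition~\ref{prop:reversibility_CPF}, with Corollary~\ref{cor:CPFimageWithBS} playing the role of Proposition~\ref{prop:Equivalent}. The strategy is to express the joint law of the pair $(Z_T^{\star},\check{Z}_T^{\star})$ as an integral against $\pi_T^N$ followed by a sequence of Gibbs updates, and then to exhibit the symmetry of this joint under swap at the level of trajectories.

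By Corollary~\ref{cor:CPFimageWithBS}, the CPF-BS kernel is unchanged if one initializes step CPF-1 by drawing $(B_{0:T-1}^{\star},N^{\star})$ from the conditional law $\pi_T^N(\cdot\mid z_T^{\star})$ rather than fixing it to $(1,\ldots,1)$. With this choice, after CPF-1 the full augmented state $W=(X_{0:T}^{1:N},A_{0:T-1}^{1:N},B_{0:T-1}^{\star},N^{\star})$ is distributed exactly according to $\pi_T^N$. Steps CPF-2 and CPF-3 then amount to a sequence of single-coordinate Gibbs updates with respect to $\pi_T^N$: first $N^{\star}$ (CPF-2), then recursively $B_{T-1}^{\star},B_{T-2}^{\star},\ldots,B_0^{\star}$ via equation~\eqref{eq:BS_distr}. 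Let $K$ denote the composed sweep; each of its components is a reversible Gibbs step for $\pi_T^N$, and $K$ leaves $\pi_T^N$ invariant.

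Mimicking the chain of equalities in the proof of Proposition~\ref{prop:reversibility_CPF}, one then writes
\[
\E\{h(Z_T^{\star},\check{Z}_T^{\star})\} = \int \pi_T^N(dw)\,K(w,dw')\,h\bigl(Y(w),Y(w')\bigr),
\]
where $Y(w)$ denotes the trajectory indexed by the ancestry $(B_{0:T-1}^{\star},N^{\star})$ encoded in $w$. Reversibility reduces to showing this integral is invariant under interchange of the two arguments of $h$. Detailed balance of each individual Gibbs update gives $\pi_T^N(dw)K(w,dw')=\pi_T^N(dw')\tilde{K}(w',dw)$, where $\tilde{K}$ is the same scan performed in reverse order; a relabelling $w\leftrightarrow w'$ then reduces the desired identity to showing that $K$ and $\tilde{K}$ induce the same Markov kernel on trajectories.

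The main obstacle is precisely this last identification, because deterministic-order multi-coordinate Gibbs scans need not be reversible on the full state space. The plan is to apply Corollary~\ref{cor:CPFimageWithBS} a second time, now starting from $\check{z}_T^{\star}$ and initialising the new ancestry from $\pi_T^N(\cdot\mid\check{z}_T^{\star})$, to obtain an alternative integral representation of the reversed pair $(\check{Z}_T^{\star},Z_T^{\star})$. Careful bookkeeping, exploiting the fact that the CPF-3 conditional~\eqref{eq:BS_distr} at each time $t$ coincides with the corresponding Gibbs conditional under $\pi_T^N$ and that cycle invariance (Assumption~(Rb)) is available for reindexing the labels, should show that this alternative representation coincides with the one produced by $\tilde{K}$, delivering the required symmetry and hence the reversibility of $P_T^{N,B}$.
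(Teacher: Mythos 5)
There is a genuine gap, and it sits exactly where you acknowledge it. Your reduction is fine as far as it goes: by Corollary \ref{cor:CPFimageWithBS} you may initialise the ancestry from $\pi_{T}^{N}(\cdot\mid z_{T}^{\star})$, the augmented state after CPF-1 is then $\pi_{T}^{N}$-distributed, each sub-step of CPF-2/CPF-3 is a single-coordinate Gibbs update (of $N^{\star}$, then of the entries $A_{t}^{B_{t+1}^{\star}}$) and the composed sweep $K$ satisfies $\pi_{T}^{N}(dw)K(w,dw')=\pi_{T}^{N}(dw')\tilde{K}(w',dw)$ with $\tilde{K}$ the reversed scan. But reversibility of the trajectory kernel then hinges entirely on showing that $\tilde{K}$ and $K$ induce the same kernel on the selected trajectory, and your proposal only asserts that a second application of Corollary \ref{cor:CPFimageWithBS} plus ``careful bookkeeping'' and cycle invariance ``should show'' this. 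That is the whole difficulty, not a routine verification: the scan is state-adaptive (which coordinate $A_{t}^{b}$ is updated at time $t$ depends on the lineage produced by the later updates), so $\tilde{K}$ is not the time-reversal of a fixed systematic scan, and cycle invariance (Rb) only concerns relabelling of particle indices, not reversing the order of the backward sweep; under general (R) the conditional $\varrho_{t}^{c}(\cdot\mid A_{t}^{-b})$ genuinely couples the components, so nothing like a naive exchange of the two sweeps is available.

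For comparison, the paper closes this gap by two different devices, neither of which appears in your sketch. Under (MR) it exploits the conditional independence of the $A_{t}^{n}$'s to replace step CPF-3 by a complete regeneration of the genealogy from $\pi_{T}^{N}(da_{0:T-1}^{1:N}\mid x_{0:T}^{1:N},n^{\star})$, after which the joint expression for $\E\{h(Z_{T}^{\star},\check{Z}_{T}^{\star})\}$ is symmetric in $(n^{\star},a_{0:T-1}^{1:N})$ and $(\check{n}^{\star},\check{a}_{0:T-1}^{1:N})$ by inspection. Under general (R) it runs the Appendix B argument: an induction over partial backward-sampling index sets $J$, with the expectation decomposed over the events $\{B_{t+1}^{\star}=\check{B}_{t+1}^{\star}\}$ versus $\{B_{t+1}^{\star}\neq\check{B}_{t+1}^{\star}\}$, on which one can justify exchanging the partial sweeps $K_{I}$ and $K_{J}$ (or collapsing one of them to a Dirac mass) because, on those events, the components of the genealogy relevant to the two lineages either coincide or are untouched. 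Some argument of this kind --- or the (MR) regeneration trick, which however does not extend to residual or systematic resampling --- is needed; without it your proof is a correct setup followed by an unproved claim at the decisive step.
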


\begin{pf}
Consider a $\setX^{T+1}\times\setX^{T+1}\rightarrow\mathbb{R}$ measurable
function $h$ and let $(Z_{T}^{\star},\check{Z}_{T}^{\star
})= (X_{0:T}^{\star},\check{X}_{0:T}^{\star})\sim\mathbb
{Q}_{T}\otimes
P_{T}^{N,B}$. To evaluate
$\E \{ h(Z_{T}^{\star},\check{Z}_{T}^{\star}) \}$, we first
invoke Remark~\ref{rem:CPFimageWithBS} and then the following observation:
the image of CPF-BS would be unchanged
if step CPF-3 would be replaced by a Gibbs step that would update
the complete genealogy, that is, replace $A_{0:T-1}^{1:N}$ by $\check
{A}_{0:T-1}^{1:N}$,
a sample from $\pi_{T}^{N}(\mathrm{d}a_{0:T-1}^{1:N}|x_{0:T}^{1:N},n^{\star})$.
This is because the $A_{t}^{n}$'s are
independent conditionally on $(X_{0:T}^{1:N},N^{\star})$.
Thus,
\begin{eqnarray*}
\E \bigl\{ h\bigl(Z_{T}^{\star},\check{Z}_{T}^{\star}
\bigr) \bigr\} %&=\int\mathbb{Q}_{T}(\mathrm{d}x_{0:T}^{\star})\int\pi_{T}^{N}
%\left(\mathrm{d}x_{0:T}^{1:N\setminus\star},\mathrm{d}a_{0:T-1}^{1:N\setminus
%\star}|(b_{0:T-1}^{\star},n^{\star})=(1,\ldots,1),x_{0:T}^{\star}
%\right)\\
% & \times\int\pi_{T}^{N}(d\check{n}^{
%\star}|x_{0:T}^{1:N},a_{0:T-1}^{1:N})\int\pi_{T}^{N}(d
%\check{a}_{0:T-1}^{1:N}|x_{0:T}^{1:N},\check{n}^{\star})h(z_{T}^{n^{
%\star}},z_{T}^{\check{n}^{\star}})\\
&
=&\int\pi_{T}^{N}\bigl(\mathrm{d}x_{0:T}^{1:N},\mathrm{d}a_{0:T-1}^{1:N},\mathrm{d}n^{\star}
\bigr)
\\
&&{} \times\int\pi_{T}^{N}\bigl(\mathrm{d}\check{n}^{\star}|x_{0:T}^{1:N}
\bigr)\int\pi _{T}^{N}\bigl(\mathrm{d}\check{a}_{0:T-1}^{1:N}|x_{0:T}^{1:N},
\check{n}^{\star
}\bigr)h\bigl(z_{T}^{n^{\star}},z_{T}^{\check{n}^{\star}}
\bigr)
\\
& =&\int\pi_{T}^{N}\bigl(\mathrm{d}x_{0:T}^{1:N},\mathrm{d}
\overline{a}_{0:T-1}^{1:N}\bigr)\int \pi _{T}^{N}
\bigl(\mathrm{d}n^{\star}|x_{0:T}^{1:N}\bigr)\pi
_{T}^{N}\bigl(\mathrm{d}a_{0:T-1}^{1:N}|x_{0:T}^{1:N},n^{\star}
\bigr)
\\
&&{} \times\int\pi_{T}^{N}\bigl(\mathrm{d}\check{n}^{\star}|x_{0:T}^{1:N}
\bigr)\int\pi _{T}^{N}\bigl(\mathrm{d}\check{a}_{0:T-1}^{1:N}|x_{0:T}^{1:N},
\check{n}^{\star
}\bigr)h\bigl(z_{T}^{n^{\star}},z_{T}^{\check{n}^{\star}}
\bigr)
\\
& =&\E \bigl\{ h\bigl(\check{Z}_{T}^{\star},Z_{T}^{\star}
\bigr) \bigr\},
\end{eqnarray*}
where the second equality is based on the fact that one may generate
$(X_{0:T}^{1:N},A_{0:T-1}^{1:N},N^{\star})\sim\pi_{T}^{N}$
as: $(X_{0:T}^{1:N},\overline{A}_{0:T-1}^{1:N},N^{\star})\sim\pi_{T}^{N}$,
then update $\overline{A}_{0:T-1}^{1:N}$ as $A_{0:T-1}^{1:N}$ through
the Gibbs step $\pi_{T}^{N}(\mathrm{d}a_{0:T-1}^{1:N}|x_{0:T}^{1:N},n^{\star})$,
and the third equality is a simple change of variables. The simplification
of $\pi_{T}^{N}(\mathrm{d}\check{n}^{\star}|x_{0:T}^{1:N},a_{0:T-1}^{1:N})$
into $\pi_{T}^{N}(\mathrm{d}\check{n}^{\star}|x_{0:T}^{1:N})$ (first equality
onward) reflects the fact that step~CPF-2 does not depend on $a_{0:T-1}^{1:N}$.
\end{pf}
The final result shows that the CPF-BS kernel dominates the CPF kernel
in lag-one autocorrelation.

%th10 #&#
\begin{thmm}
\label{prop:multiBSlag1domination} The CPF-BS kernel,
denoted $P_{T}^{N,B}$, dominates the CPF kernel in lag one autocorrelation,
that is, let $h$ be square integrable function \textup{then}
\[
0\leq\E_{\mathbb{Q}_{T}\otimes P_{T}^{N,B}} \bigl\{ h\bigl(Z_{T}^{\star
}\bigr)h
\bigl(\check{Z}_{T}^{\star}\bigr) \bigr\} \leq\E_{\mathbb{Q}_{T}\otimes
P_{T}^{N}}
\bigl\{ h\bigl(Z_{T}^{\star}\bigr)h\bigl(\check{Z}_{T}^{\star}
\bigr) \bigr\}.
\]
\end{thmm}

\begin{pf}
We use again the facts that $\pi_{T}^{N}(\mathrm{d}\check{n}^{\star
}|x_{0:T}^{1:N},a_{0:T-1}^{1:N})$
reduces into $\pi_{T}^{N}(\mathrm{d}\check{n}^{\star}|x_{0:T}^{1:N})$, and
that, under multinomial resampling, step CPF-3 may be replaced by
a Gibbs step that updates the complete genealogy as in the proof of
Proposition~\ref{prop:BSreversible}.
\begin{eqnarray*}
&&\E_{\mathbb{Q}_{T}\otimes P_{T}^{N,B}} \bigl\{ h\bigl(Z_{T}^{\star
}\bigr)h\bigl(
\check {Z}_{T}^{\star}\bigr) \bigr\} \\
&&\quad =\int
\pi_{T}^{N}\bigl(\mathrm{d}x_{0:T}^{1:N}\bigr)\int
\pi _{T}^{N}\bigl(\mathrm{d}n^{\star}|x_{0:T}^{1:N}
\bigr)\int\pi _{T}^{N}\bigl(\mathrm{d}a_{0:T-1}^{1:N}|x_{0:T}^{1:N},n^{\star}
\bigr)
\\
&&\qquad{} \times\int\pi_{T}^{N}\bigl(\mathrm{d}\check{n}^{\star}|x_{0:T}^{1:N}
\bigr)\int\pi _{T}^{N}\bigl(\mathrm{d}\check{a}_{0:T-1}^{1:N}|x_{0:T}^{1:N},
\check{n}^{\star
}\bigr)h\bigl(z_{T}^{n^{\star}}\bigr)h
\bigl(z_{T}^{\check{n}^{\star}}\bigr)
\\
&&\quad =\int\pi_{T}^{N}\bigl(\mathrm{d}x_{0:T}^{1:N}
\bigr) \biggl(\int\pi_{T}^{N}\bigl(\mathrm{d}n^{\star
}|x_{0:T}^{1:N}
\bigr)\int\pi _{T}^{N}\bigl(\mathrm{d}a_{0:T-1}^{1:N}|x_{0:T}^{1:N},n^{\star
}
\bigr)h\bigl(z_{T}^{n^{\star}}\bigr) \biggr)^{2}
\\
&&\quad \leq\int\pi_{T}^{N}\bigl(\mathrm{d}x_{0:T}^{1:N}
\bigr)\int\pi _{T}^{N}\bigl(\mathrm{d}a_{0:T-1}^{1:N}|\,\mathrm{d}x_{0:T}^{1:N}
\bigr) \biggl(\int\pi _{T}^{N}\bigl(\mathrm{d}n^{\star
}|x_{0:T}^{1:N},a_{0:T-1}^{1:N}
\bigr)h\bigl(z_{T}^{n^{\star}}\bigr) \biggr)^{2}
\\
&&\quad =\E_{\mathbb{Q}_{T}\otimes P_{T}^{N}} \bigl\{ h\bigl(Z_{T}^{\star
}\bigr)h
\bigl(\check {Z}_{T}^{\star}\bigr) \bigr\}.
\end{eqnarray*}
The penultimate line uses Jensen inequality. The last line is indeed
the same expectation but under $\mathbb{Q}_{T}\otimes P_{T}^{N}$
(no BS step).
\end{pf}

We are now in position to state the main result of this section.

%th11 #&#
\begin{thmm}
The CPF-BS kernel dominates the CPF kernel in efficiency ordering.
\end{thmm}

\begin{pf}
This is a direct consequence of Theorem \ref{thmm:tier98}
and Propositions \ref{prop:reversibility_CPF} and \ref{prop:BSreversible}.
\end{pf}

%s5 #&#
\section{Alternative resampling schemes}
\label{sec:alt-resampling}

We mentioned in the previous section that the backward sampling step is
not always
applicable, as it relies on the probability density of the Markov
kernel $m_t$ being tractable.
In this section, we discuss another way to improve the performance of
particle Gibbs through the introduction of resampling schemes that are
less noisy than multinomial resampling. The intuition is
that such resampling schemes tend
to reduce path degeneracy in particle systems, and thus should lead to
better mixing for particle Gibbs; see, for example, \cite{pathstorage}
for some results on path degeneracy.

We no longer assume that the resampling
distribution $\varrho_t$ is \eqref{eq:multiRS}, and we rewrite $\pi
_T^N$ under
the more general expression (using the same notation as in Section~\ref{sec:PG_framework})
%
%e19 #&#
\begin{eqnarray}\label
{eq:defpi_general}
&&\pi_{T}^{N} \bigl(\mathrm{d}z_{0:T}^{1:N},\mathrm{d}a_{0:T-1}^{1:N},\mathrm{d}n^\star
\bigr) \nonumber\\
&&\quad =\frac{1}{\mathcal{Z}_T}m_{0}^{\otimes N}\bigl(\mathrm{d}z_{0}^{1:N}
\bigr)
\\
&&\qquad{} \times\prod_{t=1}^{T}
\Biggl\{ \Biggl[\frac{1}{N}\sum_{n=1}^{N}G_{t-1}
\bigl(z_{t-1}^{n}\bigr) \Biggr] \R_{t-1}
\bigl(z_{t-1}^{1:N},\mathrm{d}a_{t-1}^{1:N}\bigr)
\prod_{n=1}^{N} \bigl[ q_{t}
\bigl(z_{t-1}^{a_{t-1}^{n}},\mathrm{d}z_{t}^{n}\bigr)
\bigr] \Biggr\}\frac{1}{N}G_{T}\bigl(z_{T}^{n^\star}
\bigr).\nonumber
\end{eqnarray}

Recall that to establish validity of particle Gibbs, we applied the
following change of variables:
\[
\bigl(z_{0:T}^{1:N},a_{0:T-1}^{1:N},n^{\star}
\bigr) \leftrightarrow\bigl(z_{0:T}^{1:N\setminus\star
},a_{0:T-1}^{1:N\setminus
\star},z_{0:T}^{\star},
b_{0:T-1}^{\star},n^{\star}\bigr),
\]
to $\pi_T^N$, which led to distribution \eqref{eq:pi_TN_reformulated},
which is such that $Z_T^\star=Z_T^{N^\star}$ has marginal distribution
$\mathbb{Q}_T(\mathrm{d}z_T)$. To generalise \eqref{eq:pi_TN_reformulated}
to resampling schemes other than multinomial resampling, we assume that
the resampling
distribution is \emph{marginally unbiased}: the joint distribution
$\varrho_t(z_t^{1:N},\mathrm{d}a_t^{1:N})$ (for fixed $z_t^{1:N}$) is such that the
marginal distribution
of a single component $A_t^n$ is the discrete distribution which
assigns probability $W_t^m$ to outcome $m\in1\dvtx N$. (We shall see that,
up to a trivial modification, standard resampling schemes fulfil this
condition.)

Under marginal unbiasedness, $\varrho_t(z_t^{1:N},\mathrm{d}a_t^{1:N})$ may be
decomposed as follows, for any
$n\in1\dvtx N$:
\[
\varrho_t\bigl(z_t^{1:N},\mathrm{d}a_t^{1:N}
\bigr) = \bigl\{W_t^{a_t^n}\bigl(z_t^{1:N}
\bigr) \,\mathrm{d}a_t^n \bigr\} \varrho_t^c
\bigl(z_t^{1:N}, \bigl\{\mathrm{d}a_t^{1:N\setminus
n}|A_t^n=a_t^n
\bigr\} \bigr), %
\]
where the second factor above denotes the distribution of the $(N-1)$
labels $A_t^{1:N\setminus n}$ conditional on $A_t^{n}=a_t^n$ which
corresponds to the joint distribution $\varrho_t(z_t^{1:N},\mathrm{d}a_t^{1:N})$.
Thus, applying the change of variables above to $\pi_T^N$ gives
%
%e20 #&#
%e21 #&#
\begin{eqnarray}
&&\pi_{T}^{N}\bigl(\mathrm{d}z_{0:T}^{1:N\setminus\star},\mathrm{d}a_{0:T-1}^{1:N\setminus
\star
},\mathrm{d}z_{0:T}^{\star},\mathrm{d}b_{0:T-1}^{\star},\mathrm{d}n^{\star}
\bigr)\nonumber\\
&&\quad=
\frac{1}{N^{T+1}}\bigl(\mathrm{d}b_{0:T-1}^{\star}\,\mathrm{d}n^{\star}\bigr)
\mathbb {Q}_{T}\bigl(\mathrm{d}z_{T}^{\star}\bigr)\prod
_{t=0}^{T-1}\delta_{ (
[z_{T}^{\star
} ]_{t+1} )}
\bigl(\mathrm{d}z_{t}^{\star}\bigr)
\\
&&\qquad{}\times\prod_{n\neq b_{0}^{\star}}m_{0}
\bigl(\mathrm{d}z_{0}^{n}\bigr) \Biggl[\prod
_{t=1}^{T} \R_{t-1}^c
\bigl(z_{t-1}^{1:N}, \bigl\{\mathrm{d}a_{t-1}^{1:N \setminus
b_t^\star
} |
A_{t-1}^{b_t^\star}= b_{t-1}^{\star} \bigr\} \bigr)
\prod_{n\neq b_{t}^{\star}} q_{t}\bigl(z_{t-1}^{a_{t-1}^{n}},\mathrm{d}z_{t}^{n}
\bigr) \Biggr].
\nonumber
\end{eqnarray}

Inspection of the distribution above reveals that step CPF-1 (as
defined in Section~\ref{sec:PG_framework}), that is, the Gibbs step
that regenerates the complete particle system
conditional on one ``frozen'' trajectory~$z_{0:T}^\star$, now requires
to sample
at each iteration $t$ from the conditional resampling distribution
$\varrho^c
_{t-1}$. The two next sections
explains how to do so for two popular resampling schemes,
namely residual resampling and systematic resampling.

To simplify notation in the next sections, we will remove any
dependency in $t$, and consider
the generic problem of deriving, from a certain distribution of $N$
labels $A^{1:N}$ based
on normalised weights $W^{1:N}$, the conditional distribution of
$A^{1:N}$ given that one component is fixed.

%s5.1 #&#
\subsection{Conditional residual resampling}
\label{sub:Conditional-residual-resampling}

%For a collection of $N$ normalised weights $W^{1:N}$, $\sum_{n=1}^N
%W^n=1$,
%consists in generating a vector $\bar{A}^{1:N}$
%of $N$ labels in $1:N$ as follows: Let the residue $r^{n}=NW^{n}-\left
%\lfloor NW^{n}\right\rfloor$
%for $n\in1:N$, and $R=\sum_{n=1}^{N}r^{n}$. First, stack $\left\lfloor
%NW^{n}\right\rfloor$
%``deterministic'' copies of particle label $n$, sequentially in
%$n$. This gives a vector of $N-R$ labels. Then, extend that vector
%by appending $R$ draws from $\mathcal{M}(r^{1:N}/R)$.

The standard definition of residual resampling \cite{LiuChen} is
recalled as
Algorithm \ref{alg:res}. It is clear that this resampling scheme is
such that the number
of off-springs of particles $n$ is a random variable with expectation
$NW^n$ (assuming $W^{1:N}$ is the vector of the $N$ normalised weights
used as input).
To obtain a resampling distribution that is marginally unbiased (as
defined in the previous
section), we propose the following simple modification: we run
Algorithm \ref{alg:res},
and then we permute randomly the output: $A^{1:N}=\bar{A}^{\sigma
(1:N)}$ where $\sigma$ is chosen uniformly among the $N!$ permutations
on the set $1\dvtx N$.

\begin{algorithm}[b]
\caption{Residual resampling}
\label{alg:res}
\begin{description}
\item[\textbf{Input:}] normalised weights $W^{1:N}$
\item[\textbf{Output:}] a vector of $N$ random labels $\bar
{A}^{1:N}\in1\dvtx N$
\begin{enumerate}[(a)]
\item[\textbf{(a)}] Compute $r^n = NW^n - \lfloor NW^n \rfloor$ (for each
$n\in
1\dvtx N)$ and $R=\sum_{n=1}^N r^n$.
\item[\textbf{(b)}] Construct $\bar{A}^{1:(N-R)}$ as the ordered vector of size
$(N-R)$ that contains $\lfloor NW^n \rfloor$ copies of value $n$ for
each $n\in1\dvtx N$.
\item[\textbf{(c)}] For each $n\in(N-R+1)\dvtx N$, sample $\bar{A}^n\sim\mathcal
{M}(r^{1:N}/R)$.
\end{enumerate}
\end{description}
\end{algorithm}

Another advantage of randomly permuting the labels obtained by residual
resampling is that
it makes the particle system exchangeable, as with multinomial
resampling (assuming that
residual resampling is applied at every iteration $t$ of the particle
algorithm). Thus, using
the same line of reasoning as in Section~\ref{sec:PG_framework}, we see
that one may arbitrarily
relabel the frozen trajectory $z_T^\star$ as $(1,\ldots,1)$ before
applying step CPF-1.
Therefore, it is sufficient
to derive an algorithm to sample from the distribution of labels
$A^{2:N}$, conditional
on $A^1=1$.

We observe that, under residual resampling (with randomly permuted
output), the probability
that $A^1$ is set to one of the $ \lfloor NW^{1} \rfloor$
$ $ ``deterministic'' copies of label $1$ is $ \lfloor
NW^{1}
\rfloor/NW^{1}$.
This remark leads to Algorithm \ref{alg:rescond}, which generates
a vector $A^{1:N}$ of $N$ labels such that $A^1=1$.

In practice, assuming conditional residual resampling is applied at
every iteration of the particle algorithm (i.e., when generating
$(X_{0:T}^{2:N},A_{0:T-1}^{2:N})$ conditional on $X_{0:T}^1$), step (d)
of Algorithm~\ref{alg:rescond} may be
omitted, as the actual order of particles with labels $2\dvtx N$ do not play
any role in the following iterations (and, therefore, has no bearing on
the image of the CPF kernel).

\begin{algorithm}[t]
\caption{Conditional residual resampling}
\label{alg:rescond}
\begin{description}
\item[\textbf{Input:}] normalised weights $W^{1:N}$
\item[\textbf{Output:}] a vector of $N$ random labels $A^{1:N}\in1\dvtx N$
such that $A^1=1$
\begin{enumerate}[(a)]
\item[\textbf{(a)}] Compute $r^n = NW^n - \lfloor NW^n \rfloor$ (for each
$n\in
1\dvtx N)$ and $R=\sum_{n=1}^N r^n$.
\item[\textbf{(b)}] Generate $U\sim\mathcal{U}[0,1]$.
\item[\textbf{(c)}] \textbf{If} $U< \lfloor NW^{1} \rfloor/NW^{1}$,
\textbf{then} generate $\bar{A}^{1:N}$ using
Algorithm \ref{alg:res};
\item[] \textbf{Else} generate $\bar{A}^{2:N}$ exactly
as in Algorithm \ref{alg:res}, except
that the number of multinomial draws in step (c) is $R-1$ instead of
$R$. (Thus $\bar{A}^{2:N}$
contains $\lfloor NW^n \rfloor$ deterministic copies of value $n$, for
each $n\in1\dvtx N$, and
$(R-1)$ random copies.)
\item[\textbf{(d)}] Let $A^1=1$, and $A^{2:N}=\bar{A}^{\sigma(2:N)}$, where
$\sigma$ is a random $2\dvtx N\rightarrow2\dvtx N$ permutation.
\end{enumerate}
\end{description}
\end{algorithm}

%s5.2 #&#
\subsection{Conditional systematic resampling}
\label{sub:Conditional-systematic-resampling}

The systematic resampling algorithm of \cite{CarClifFearn} consists
in creating $N$ off-springs, based on the normalised weights $W^{1:N}$, as
follows. Let $U$ a uniform variate in $[0,1]$, $v^0=0$, $v^n=\sum_{m=1}^n NW^m $, and set $\bar{A}^n=m$ for the $N$ pairs $(n,m)$ such
that $v^{m-1}\leq U+n-1 < v^m$.
The standard algorithm to perform systematic resampling (for a given
$U$, sampled from $\mathcal{U}([0,1])$ beforehand) is recalled as
Algorithm \ref{alg:sys}.

\begin{algorithm}
\caption{Systematic resampling (for a given $U$)}
\label{alg:sys}
\begin{description}
\item[\textbf{Input:}] normalised weights $W^{1:N}$, and $U\in[0,1]$
\item[\textbf{Output:}] a vector of $N$ random labels $A^{1:N}\in1\dvtx N$
\begin{enumerate}[(a)]
\item[\textbf{(a)}] Compute the cumulative weights as %$c^0=0$,
$v^n=\sum_{m=1}^n NW^m$ for $n\in1\dvtx N$.
\item[\textbf{(b)}] Set $s\leftarrow U$, $m\leftarrow1$.
\item[\textbf{(c)}] \textbf{For} $n= 1\dvtx N$
\item[] \quad\textbf{While} $v^m<s$ \textbf{do} $m\leftarrow m+1$.
\item[]\quad$\bar{A}^n\leftarrow m$, and $s\leftarrow s+1$.
\item[]$ $ \textbf{End For}
\end{enumerate}
\end{description}
\end{algorithm}

To obtain a resampling distribution that is marginally unbiased, we
propose to randomly cycle the output: $A^{1:N}=\bar{A}^{c(1:N)}$, where
$c\dvtx (1\dvtx N)\rightarrow(1\dvtx N)$ is drawn uniformly among the $N$ cycles of
length $N$.
Recall that a cycle $c$ is a permutation such that for a certain $c_0
\in1\dvtx N$ and for all $n \in1\dvtx N$,
$c(n)=c_0+n$ if $c_0+n\leq N$, $c(n)=c_0+n-N$ otherwise.

Cycle randomisation is slightly more convenient that permutation
randomisation when it comes to deriving the conditional systematic
resampling algorithm. It is also slightly cheaper in computational
terms. Under cycle randomisation, the particle system is no longer
exchangeable (assuming systematic resampling is carried out at each
iteration), but it remains true that one has the liberty to relabel
arbitrarily the
frozen trajectory, say with labels $(1,\ldots,1)$, without changing the
image of the PG kernel. (A proof may be obtained from the corresponding
author.) Thus, as in the previous section,
it is sufficient to derive the algorithm to simulate $A^{2:N}$
conditional on $A^1=1$.

A distinctive property of systematic resampling is that
the number of off-springs of particle $n$ is either $ \lfloor
NW^{n} \rfloor$
or $ \lfloor NW^{n} \rfloor+1$. In particular, the
algorithm starts
by creating $\lfloor N W^1 \rfloor$ ``deterministic'' copies of
particle $1$, then
adds one extra ``random copy,'' with probability $r^1=NW^1-\lfloor
NW^1\rfloor$, and so on.
When $N$ off-springs have been obtained, the output is randomly cycled.
Thus, conditional on $A^1=1$, the probability that a deterministic
copy of $1$ was moved to position $1$ is $ \lfloor NW^{1}
\rfloor/NW^{1}$.
This observation leads to the Algorithm \ref{alg:syscond} for generating from
$A^{2:N}$ conditional on $A^1=1$.

\begin{algorithm}[b]
\caption{Conditional systematic resampling}
\label{alg:syscond}
\begin{description}
\item[\textbf{Input:}] normalised weights $W^{1:N}$
\item[\textbf{Output:}] a vector of $N$ random labels $A^{1:N}\in1\dvtx N$
such that $A^1=1$
\begin{enumerate}[(a)]
\item[\textbf{(a)}] \textbf{If} $NW^1\leq1$, sample $U\sim\mathcal
{U}[0,NW^1] $.
\item[$ $] \textbf{Else} Set $r^1=NW^1-\lfloor NW^1 \rfloor$. With
probability $\frac{r^{1} ( \lfloor NW^{1} \rfloor
+1
)}{NW^{1}}$, sample $U\sim\mathcal{U}[0,r^{1}]$, otherwise sample
$U\sim
\mathcal{U}[r^{1},1]$.
\item[\textbf{(b)}] Run Algorithm \ref{alg:sys} with inputs $W^{1:N}$ and $U$;
call $\bar{A}^{1:N}$
the output.
\item[\textbf{(c)}] Choose $C$ uniformly from the set of cycles such that $\bar
{A}^{C(1)}=1$
and set $A^{1:N}=\bar{A}^{C(1:N)}$.
\end{enumerate}
\end{description}
\end{algorithm}

%s5.3 #&#
\subsection{Note on backward sampling for alternative resampling schemes}

It is possible to adapt the backward sampling step (see Section~\ref{sec:Backward-sampling}) to residual or systematic resampling.
Unfortunately, the corresponding algorithmic details are quite involved,
and the results are not very satisfactory (in the sense of not
improving strongly the mixing of particle Gibbs relative to the version
without a backward sampling step); see \cite{PGibbs_arxiv} for details.
This seems related to the strong dependence between the labels
$A_t^{1:N}$ that is induced by residual resampling and particularly
systematic resampling, which therefore makes it more difficult to update
one single component of this vector.

As pointed out by a referee, it is straightforward to adapt
\emph{ancestor sampling} \cite{LindstenAncestor}, which is an
alternative approach to backward sampling for rejuvenating the ancestry
of the frozen trajectory, to the alternative resampling schemes.
The mixing gains of doing so deserves further investigation.

%s6 #&#
\section{Numerical experiments}
\label{sec:Numerical-experiments}

The focus of our numerical experiments is on comparing the four variants
of particle Gibbs discussed in this paper, corresponding to the three
different resampling schemes (multinomial, residual, systematic),
and whether the extra backward sampling is performed or not (assuming
multinomial resampling).

We consider the following state-space model:
\[
X_{0}\sim N\bigl(\mu,\sigma^{2}\bigr),\qquad
X_{t+1}|X_{t}=x_{t}\sim N \bigl(\mu +
\rho(x_{t}-\mu),\sigma^{2} \bigr),\qquad Y_{t}|X_{t}=x_{t}
\sim\operatorname{Poisson}\bigl(\mathrm{e}^{x_{t}}\bigr)
\]
for $t\in0\dvtx T$, hence one may take $G_{t}(x_{t})=\exp \{
-\mathrm{e}^{x_{t}}+y_{t}x_{t} \} $,
where $y_{t}$ is the observed value of $Y_{t}$. This model is motivated
by \cite{yu2011center} who consider a similar model for photon counts
in X-ray astrophysics. The parameters $\mu$, $\rho$, $\sigma$ are
assumed to be unknown, and are assigned the following (independent)
prior distributions: $\rho\sim\operatorname{Uniform}[-1,1]$, $\mu\sim
N(m_{\mu
},s_{\mu}^{2})$
and $1/\sigma^{2}\sim\operatorname{Gamma}(a_{\sigma},b_{\sigma})$; let
$\theta=(\mu,\rho,\sigma^{2})$. (We took $m_{\mu}=0$, $s_{\mu}=10$,
$a_{\sigma}=b_{\sigma}=1$ in our simulations.) We run a Gibbs sampler
that targets the posterior distribution of $(\theta,X_{0:T})$, conditional
on $Y_{0:T}=y_{0:T}$, by iterating (a) the Gibbs step that samples
from $\theta|X_{0:T},Y_{0:T}$, described below; and (b) the particle
Gibbs step discussed in this paper, which samples from $X_{0:T}|\theta
,Y_{0:T}$.
Direct calculations show that step (a) may be decomposed into the
following successive three operations, which sample from the full
conditional distribution of each component of $\theta$, conditional
on the other components of $\theta$ and $X_{0:T}$:
\begin{eqnarray*}
1/\sigma^{2}|X_{0:T}&=&x_{0:T},Y_{0:T},
\mu,\rho\sim\operatorname{Gamma} \Biggl(a_{\sigma}+\frac{T+1}{2},b_{\sigma}+
\frac{1}{2}\tilde {x}_{0}^{2}+\frac
{1}{2}\sum
_{t=0}^{T-1} (\tilde{x}_{t+1}-\rho
\tilde {x}_{t} )^{2} \Biggr),
\\
\rho|X_{0:T}&=&x_{0:T},Y_{0:T},\mu,\sigma\sim
N_{[-1,1]} \biggl(\frac
{\sum_{t=0}^{T-1}\tilde{x}_{t}\tilde{x}_{t+1}}{\sum_{t=0}^{T-1}\tilde
{x}_{t}^{2}},\frac{\sigma^{2}}{\sum_{t=0}^{T-1}\tilde
{x}_{t}^{2}} \biggr),
\\
\mu|X_{0:T}&=&x_{0:T},Y_{0:T},\rho,\sigma\sim N
\biggl(\frac
{1}{\lambda_{\mu
}} \biggl\{ \frac{m_{\mu}}{s_{\mu}^{2}}+\frac{x_{0}+(1-\rho)\sum_{t=0}^{T-1}(x_{t+1}-\rho x_{t})}{\sigma^{2}} \biggr\},
\frac
{1}{\lambda
_{\mu}} \biggr),
\end{eqnarray*}
where we have used the short-hand notation $\tilde{x}_{t}=x_{t}-\mu$,
\[
\lambda_{\mu}=\frac{1}{s_{\mu}^{2}}+\frac{1+T(1-\rho)^{2}}{\sigma^{2}},
\]
and where $N_{[-1,1]}(m,s^{2})$ denotes the Gaussian distribution
truncated to the interval $[-1,1]$.

In each case, we run our Gibbs sampler for $10^{5}$ iterations, and
discard the first $10^{4}$ iterations as a burn-in period. Apart
from the resampling scheme, and whether or not backward sampling is
used, the only tuning parameter for the algorithm is the number of
particles $N$ in the particle Gibbs step.

%s6.1 #&#
\subsection{First dataset}

The first dataset we consider is simulated from the model, with $T+1=400$,
$\mu=0$, $\rho=0.9$, $\sigma=0.5$.

Figure~\ref{fig:ACF-data1-N200} reports the ACF (Autocorrelation function)
of certain components of $ (\theta,X_{0:T} )$ for the four
considered variants of our algorithm, for $N=200$.

Clearly, the version which includes a backward sampling step performs
best. The version based on systematic resampling comes second. This
suggests that, in situations where backward sampling cannot be implemented,
one may expect that using systematic resampling should be beneficial.

%f1 #&#
\begin{figure}

\includegraphics{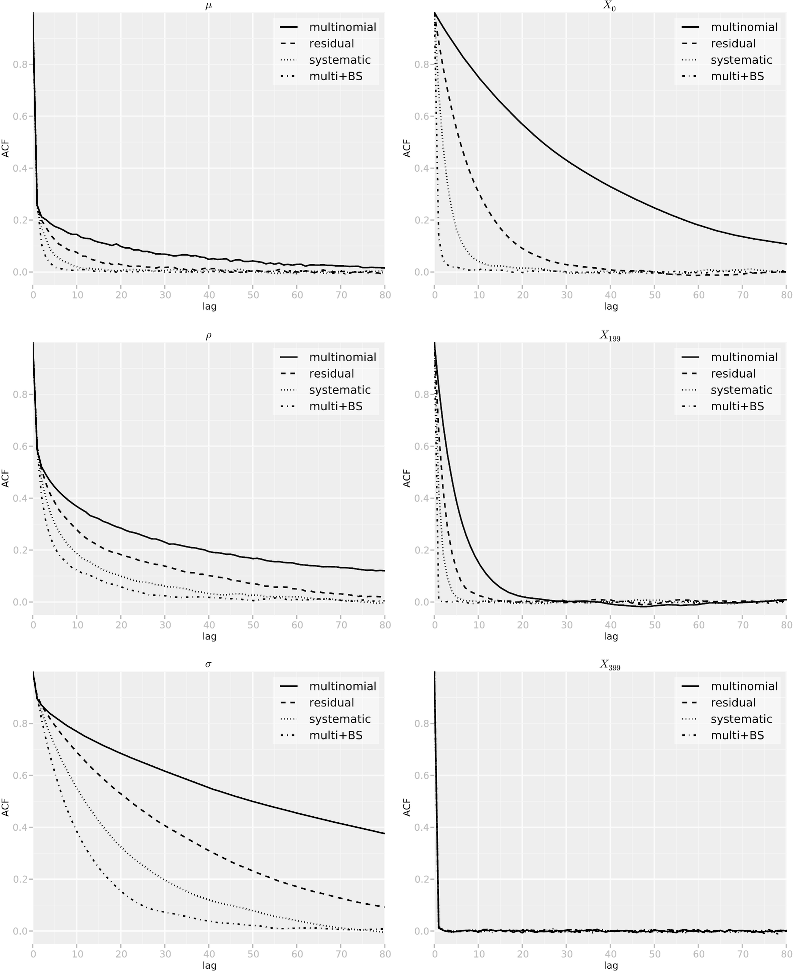}

\caption{First dataset: ACF for different
components of $(\theta,X_{0:T})$ and the four considered variants
of particle Gibbs ($N=200$).}\label{fig:ACF-data1-N200}
\end{figure}

%\begin{figure}
%\centering{}
%\includegraphics[scale=0.4]{figs/mu_acf_data3_N200}
%\includegraphics[scale=0.4]{figs/x1_acf_data3_N200}
%\includegraphics[scale=0.4]{figs/rho_acf_data3_N200}
%\includegraphics[scale=0.4]{figs/xmid_acf_data3_N200}
%\includegraphics[scale=0.4]{figs/sigma_acf_data3_N200}
%\includegraphics[scale=0.4]{figs/xend_acf_data3_N200}
%
%\end{figure}

%Note that this ACF comparison does not take into account the relative
%CPU cost of each variant of particle Gibbs. As a rule of thumb, one
%may consider that the different resampling schemes have essentially
%the same CPU cost. In fact, it is often the case that the resampling
%step represents a small part of the total CPU cost of a particle
%algorithm.
%On the other hand, backward sampling amounts to a backward pass through
%the data, which may represent a significant extra cost, relative to the
%forward pass already performed during the generation of the particle
%system. However, this extra cost is of course highly model and
%implementation-dependent.
%In that respect, if CPU cost was taken into account, then systematic
%resampling without backward sampling would appear marginally better
%in this particular exercise (for $N=200$), in terms of effective
%sample size
%per CPU time across every dimension.

It is also worthwhile to look at the update rates of $X_{t}$ with
respect to $t$ which is defined as the proportion of iterations where
$X_{t}$ changes value; see left panel of Figure~\ref{fig:coales-data1}.
This figure reveals that backward sampling increases very significantly
the probability of updating $X_{t}$ to a new value, especially at
small $t$ values, to a point where this proportion is close to one.
This also suggests that good performance for backward sampling might
be obtained with a smaller value of $N$.

To test this idea, we ran the four variants of our Gibbs sampler, but
with $N=20$. The right side of Figure~\ref{fig:coales-data1} reveals
the three non-backward sampling algorithms provide useless results
because components of $X_{0:300}$ hardly ever change values. For
the same reasons, the ACFs of these variants do not decay at reasonable
rate (which are not shown here).

To summarise, in this particular exercise, implementing backward sampling
is very beneficial, as it leads to good mixing even if $N$ is small.
If backward sampling could not be implemented, then using systematic resampling
may also improve performance, but not to same extent as backward sampling,
as it may still require to take $N$ to a larger value to obtain reasonable
performance.

%f2 #&#
\begin{figure}

\includegraphics{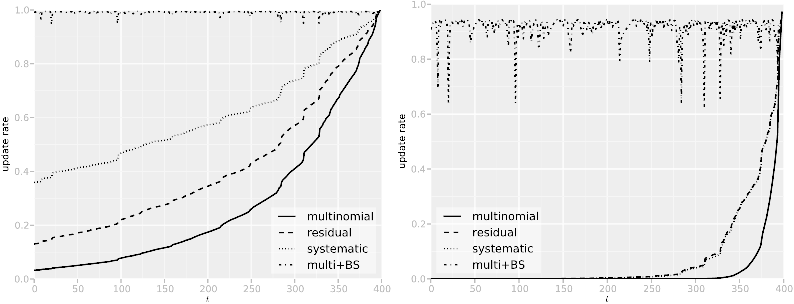}

\caption{First dataset and resulting update rates
of $X_{t}$ versus $t\in0:399$. Left plot is for $N=200$ and right
is for $N=20$.
For $N=20$, forward only versions of systematic and residual
perform similarly.} \label{fig:coales-data1}
\end{figure}

%s6.2 #&#
\subsection{Second dataset}

We consider a second dataset, simulated from the model with $T+1=200$,
$\mu=\log(5000)$, $\rho=0.5$, $\sigma=0.1$. (These values are close
to the posterior expectation for the real dataset of~\cite{yu2011center}.)

The interest of this example relative to the first one is twofold.
Firstly, the positive impact of backward sampling is even bigger in
this case. We have
to increase $N$ to $N=1000$ to obtain non-zero update rates for
the three variants that do not use backward sampling, whereas good
update rates may be obtained for $N=20$ for either multinomial or
residual resampling, when backward sampling is used; see Figure~\ref{fig:coal-data2}.

Secondly, we observe that backward sampling leads to excellent performance
even when $N$ is small, see also the ACF in Figure~\ref{fig:ACF-N20-two-datasets}, which
are close to the ACF of an independent process. Thus, the performance
of that variant of particle Gibbs seems to on par with the algorithm of~\cite{yu2011center}, which is specialised to this particular model
(whereas particle Gibbs may be used in a more general class of models).

%f3 #&#
\begin{figure}[b]

\includegraphics{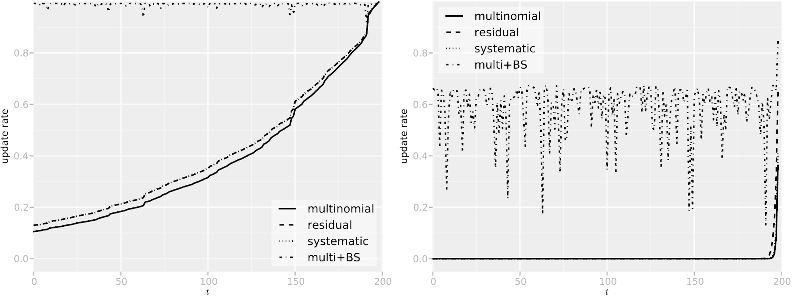}

\caption{Second dataset, same plots as Figure \protect\ref{fig:coales-data1},
with $N=1000$ (left panel) and $N=20$ (right panel). Same legend
as Figure \protect\ref{fig:ACF-data1-N200}.
In the left plot, residual and systematic are largely indistinguishable.
In the right plot, the three forward only schemes indistinguishable
before $t\approx190$.}\label{fig:coal-data2}
\end{figure}

%f4 #&#
\begin{figure}

\includegraphics{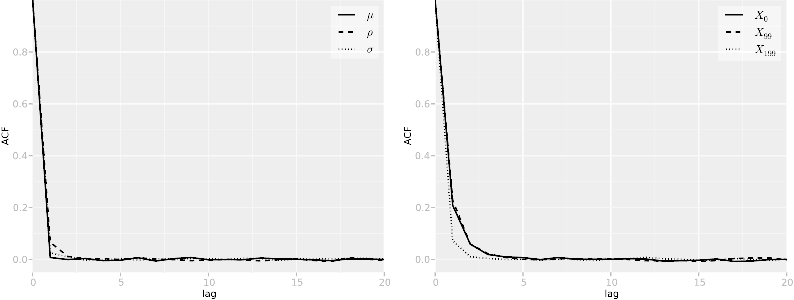}

\caption{Second dataset: ACF for certain components of $(\theta,X_{0:T})$
for particle Gibbs with backward sampling
($N=20$).}\label{fig:ACF-N20-two-datasets}
\end{figure}

%s7 #&#
\section{Discussion and conclusions}
\label{sec:Conclusion}

We now discuss the main practical conclusions that one can draw
from our numerical studies.

First, they are many situations where backward resampling cannot be
implemented, in particular when the probability density of the Markov
transition is not tractable. In that case, our simulations suggests
that one should run particle
Gibbs with systematic resampling, as this leads to better mixing.
A possible explanation is that, when only a forward pass is performed,
the lower variability of systematic resampling makes it less likely
that the proposed trajectories in the particle system coalesce with
the fixed trajectory during the resampling steps. Therefore, the particle
Gibbs step is more likely to output a trajectory which is different
than the previous one.

Second, when backward sampling can be implemented, it should be used,
as this makes it possible to set $N$ to a significantly smaller value
while maintaining good mixing; see also \cite{Lindsten2012,LindstenAncestor}
for similar findings.

In all cases, we recommend inspecting (on top of ACF plots) the same
type of plots as in Figures~\ref{fig:coales-data1} and \ref{fig:coal-data2},
that is, update rate of $X_{t}$ versus $t$, in order to assess the
mixing of the algorithm, and in particular to choose a value of $N$
that is a good trade-off between mixing properties and CPU cost. An
interesting and important theoretical line of research would be to
explain why this update rate seems more or less constant when backward
sampling is used, while it deteriorates (while going backward in time)
when backward sampling is not implemented. Another line for further
research would be to study the effect of replacing the backward sampling
step by a \emph{forward-only} ancestor sampling step as recently
proposed by \cite{LindstenAncestor}.
% This may lead to significant performance enhancements over the
%backward
% schemes involving systematic and residual resampling.

\section*{Acknowledgements}

We thank the editor and the referees for their insightful comments
and helpful suggestions regarding the presentation of the paper.
S.S. Singh's research was partly funded by the Engineering
and Physical Sciences Research Council (EP/G037590/1) whose support
is gratefully acknowledged.

%\bibliographystyle{abbrvnat}
%\bibliography{complete}

\begin{thebibliography}{28}

%b1 #&#
\bibitem{PMCMC}
%
\begin{barticle}[mr]
\bauthor{\bsnm{Andrieu},~\bfnm{Christophe}\binits{C.}},
\bauthor{\bsnm{Doucet},~\bfnm{Arnaud}\binits{A.}} \AND
\bauthor{\bsnm{Holenstein},~\bfnm{Roman}\binits{R.}}
(\byear{2010}).
\btitle{Particle {M}arkov chain {M}onte {C}arlo methods}.
\bjournal{J. R. Stat. Soc. Ser. B Stat. Methodol.}
\bvolume{72}
\bpages{269--342}.
\bid{doi={10.1111/j.1467-9868.2009.00736.x}, issn={1369-7412}, mr={2758115}}
\end{barticle}
%
\bptok{imsref}%
% NOT OUTPUTED:
% issn = 1369-7412
% url = http://\mathrm{d}x.doi.org/10.1111/j.1467-9868.2009.00736.x
% number = 3
% fjournal = Journal of the Royal Statistical Society. Series B.
%Statistical Methodology
\endbibitem

%b2 #&#
\bibitem{AndrieuVihola}
%
\begin{bmisc}[auto:STB|2014/06/10|07:15:57]
\bauthor{\bsnm{Andrieu},~\bfnm{C.}\binits{C.}} \AND
\bauthor{\bsnm{Vihola},~\bfnm{M.}\binits{M.}}
(\byear{2012}).
\bhowpublished{Convergence properties of pseudo-marginal Markov chain
Monte Carlo algorithms.
Available at \arxivurl{arXiv:1210.1484}.}
\end{bmisc}
%
\bptok{imsref}%
% NOT OUTPUTED:
% sortkey = Andrieu(2012
\endbibitem

%b3 #&#
\bibitem{CapMouRyd}
%
\begin{bbook}[mr]
\bauthor{\bsnm{Capp{\'e}},~\bfnm{Olivier}\binits{O.}},
\bauthor{\bsnm{Moulines},~\bfnm{Eric}\binits{E.}} \AND
\bauthor{\bsnm{Ryd{\'e}n},~\bfnm{Tobias}\binits{T.}}
(\byear{2005}).
\btitle{Inference in Hidden {M}arkov Models}.
\bseries{Springer Series in Statistics}.
\blocation{New York}:
\bpublisher{Springer}.
%\bnote{With Randal Douc's contributions to Chapter~9 and Christian
%P. Robert's to Chapters 6, 7 and 13, With Chapter~14
%by Gersende Fort, Philippe Soulier and Moulines,
%and Chapter~15 by St{\'e}phane Boucheron and Elisabeth Gassiat}.
\bid{mr={2159833}}
\end{bbook}
%
\bptok{imsref}%
% NOT OUTPUTED:
% isbn = 978-0387-40264-2; 0-387-40264-0
% fpage = xviii+652
\endbibitem

%b4 #&#
\bibitem{CarClifFearn}
%
\begin{barticle}[auto:STB|2014/06/10|07:15:57]
\bauthor{\bsnm{Carpenter},~\bfnm{J.}\binits{J.}},
\bauthor{\bsnm{Clifford},~\bfnm{P.}\binits{P.}} \AND
\bauthor{\bsnm{Fearnhead},~\bfnm{P.}\binits{P.}}
(\byear{1999}).
\btitle{Improved particle filter for nonlinear problems}.
\bjournal{IEE Proc. Radar, Sonar Navigation}
\bvolume{146}
\bpages{2--7}.
\end{barticle}
%
\bptok{imsref}%
% NOT OUTPUTED:
% sortkey = Carpenter(1999
\endbibitem

%b5 #&#
\bibitem{smc2}
%
\begin{barticle}[mr]
\bauthor{\bsnm{Chopin},~\bfnm{N.}\binits{N.}},
\bauthor{\bsnm{Jacob},~\bfnm{P.~E.}\binits{P.E.}} \AND
\bauthor{\bsnm{Papaspiliopoulos},~\bfnm{O.}\binits{O.}}
(\byear{2013}).
\btitle{{$\mathrm{SMC}^2$}: An efficient algorithm for sequential
analysis of state space models}.
\bjournal{J. R. Stat. Soc. Ser. B Stat. Methodol.}
\bvolume{75}
\bpages{397--426}.
\bid{doi={10.1111/j.1467-9868.2012.01046.x}, issn={1369-7412}, mr={3065473}}
\end{barticle}
%
\bptok{imsref}%
% NOT OUTPUTED:
% issn = 1369-7412
% url = http://dx.doi.org/10.1111/j.1467-9868.2012.01046.x
% number = 3
% fjournal = Journal of the Royal Statistical Society. Series B.
%Statistical Methodology
\endbibitem

%b6 #&#
\bibitem{PGibbs_arxiv}
%
\begin{bmisc}[auto:STB|2014/06/10|07:15:57]
\bauthor{\bsnm{Chopin},~\bfnm{N.}\binits{N.}} \AND
\bauthor{\bsnm{Singh},~\bfnm{S.}\binits{S.}}
(\byear{2013}).
\bhowpublished{On the particle Gibbs sampler.
Available at \arxivurl{arXiv:1304.1887}.}
\end{bmisc}
%
\bptok{imsref}%
% NOT OUTPUTED:
% sortkey = Chopin(2013
\endbibitem

%b7 #&#
\bibitem{DelMoral1996unbiased}
%
\begin{barticle}[mr]
\bauthor{\bsnm{Del Moral},~\bfnm{P.}\binits{P.}}
(\byear{1996}).
\btitle{Nonlinear filtering: Interacting particle solution}.
\bjournal{Markov Process. Related Fields}
\bvolume{2}
\bpages{555--579}.
\bid{issn={1024-2953}, mr={1431187}}
\end{barticle}
%
\bptok{imsref}%
% NOT OUTPUTED:
% issn = 1024-2953
% number = 4
% fjournal = Markov Processes and Related Fields
\endbibitem

%b8 #&#
\bibitem{delMoral:book}
%
\begin{bbook}[mr]
\bauthor{\bsnm{Del Moral},~\bfnm{Pierre}\binits{P.}}
(\byear{2004}).
\btitle{Feynman--{K}ac Formulae: Genealogical and Interacting Particle
Systems with Applications}.
\bseries{Probability and Its Applications (New York)}.
\blocation{New York}:
\bpublisher{Springer}.
\bid{doi={10.1007/978-1-4684-9393-1}, mr={2044973}}
\end{bbook}
%
\bptok{imsref}%
% NOT OUTPUTED:
% isbn = 0-387-20268-4
% url = http://dx.doi.org/10.1007/978-1-4684-9393-1
% fpage = xviii+555
\endbibitem

%b9 #&#
\bibitem{DouFreiGor}
%
\begin{bbook}[mr]
\beditor{\bsnm{Doucet},~\bfnm{Arnaud}\binits{A.}},
\beditor{\bsnm{de Freitas},~\bfnm{Nando}\binits{N.}} \AND
\beditor{\bsnm{Gordon},~\bfnm{Neil}\binits{N.}}, eds.
(\byear{2001}).
\btitle{Sequential {M}onte {C}arlo Methods in Practice}.
\bseries{Statistics for Engineering and Information Science}.
\blocation{New York}:
\bpublisher{Springer}.
\bid{doi={10.1007/978-1-4757-3437-9}, mr={1847783}}
\end{bbook}
%
\bptok{imsref}%
% NOT OUTPUTED:
% isbn = 0-387-95146-6
% url = http://dx.doi.org/10.1007/978-1-4757-3437-9
% fpage = xxviii+581
\endbibitem

%b10 #&#
\bibitem{Everitt2012}
%
\begin{barticle}[mr]
\bauthor{\bsnm{Everitt},~\bfnm{Richard~G.}\binits{R.G.}}
(\byear{2012}).
\btitle{Bayesian parameter estimation for latent {M}arkov random fields
and social networks}.
\bjournal{J. Comput. Graph. Statist.}
\bvolume{21}
\bpages{940--960}.
\bid{doi={10.1080/10618600.2012.687493}, issn={1061-8600}, mr={3005805}}
\end{barticle}
%
\bptok{imsref}%
% NOT OUTPUTED:
% issn = 1061-8600
% url = http://dx.doi.org/10.1080/10618600.2012.687493
% number = 4
% fjournal = Journal of Computational and Graphical Statistics
\endbibitem

%b11 #&#
\bibitem{golightly2011bayesian}
%
\begin{barticle}[auto:STB|2014/06/10|07:15:57]
\bauthor{\bsnm{Golightly},~\bfnm{A.}\binits{A.}} \AND
\bauthor{\bsnm{Wilkinson},~\bfnm{D.}\binits{D.}}
(\byear{2011}).
\btitle{Bayesian parameter inference for stochastic biochemical network
models using particle Markov chain Monte Carlo}.
\bjournal{Interface Focus}
\bvolume{1}
\bpages{807--820}.
\end{barticle}
%
\bptok{imsref}%
% NOT OUTPUTED:
% number = 6
\endbibitem

%b12 #&#
\bibitem{hoeffding1963probability}
%
\begin{barticle}[mr]
\bauthor{\bsnm{Hoeffding},~\bfnm{Wassily}\binits{W.}}
(\byear{1963}).
\btitle{Probability inequalities for sums of bounded random variables}.
\bjournal{J. Amer. Statist. Assoc.}
\bvolume{58}
\bpages{13--30}.
\bid{issn={0162-1459}, mr={0144363}}
\end{barticle}
%
\bptok{imsref}%
% NOT OUTPUTED:
% issn = 0162-1459
% fjournal = Journal of the American Statistical Association
\endbibitem

%b13 #&#
\bibitem{pathstorage}
%
\begin{bmisc}[auto:STB|2014/06/10|07:15:57]
\bauthor{\bsnm{Jacob},~\bfnm{P.~E.}\binits{P.E.}},
\bauthor{\bsnm{Murray},~\bfnm{L.}\binits{L.}} \AND
\bauthor{\bsnm{Rubenthaler},~\bfnm{S.}\binits{S.}}
(\byear{2013}).
\bhowpublished{Path storage in the particle filter.
Available at \arxivurl{arXiv:1307.3180}.}
\end{bmisc}
%
\bptok{imsref}%
% NOT OUTPUTED:
% sortkey = Jacob(2013
\endbibitem

%b14 #&#
\bibitem{launay2012particle}
%
\begin{barticle}[mr]
\bauthor{\bsnm{Launay},~\bfnm{Tristan}\binits{T.}},
\bauthor{\bsnm{Philippe},~\bfnm{Anne}\binits{A.}} \AND
\bauthor{\bsnm{Lamarche},~\bfnm{Sophie}\binits{S.}}
(\byear{2013}).
\btitle{On particle filters applied to electricity load forecasting}.
\bjournal{J. SFdS}
\bvolume{154}
\bpages{1--36}.
\bid{issn={2102-6238}, mr={3120434}}
\bptnote{check year}%
\end{barticle}
%
\bptok{imsref}%
% NOT OUTPUTED:
% issn = 2102-6238
% number = 2
% fjournal = Journal de la SFdS. Journal de la Soc\'\i et\'e Fran\c
%caise de Statistique
\endbibitem

%b15 #&#
\bibitem{LindstenAncestor}
%
\begin{bmisc}[auto:STB|2014/06/10|07:15:57]
\bauthor{\bsnm{Lindsten},~\bfnm{F.}\binits{F.}},
\bauthor{\bsnm{Jordan},~\bfnm{M.~I.}\binits{M.I.}} \AND
\bauthor{\bsnm{Sch\"on},~\bfnm{T.~B.}\binits{T.B.}}
(\byear{2012}).
\bhowpublished{Ancestor sampling for particle Gibbs.
Available at \arxivurl{arXiv:1210.6911}.}
\end{bmisc}
%
\bptok{imsref}%
% NOT OUTPUTED:
% sortkey = Lindsten(2012
\endbibitem

%b16 #&#
\bibitem{Lindsten2012}
%
\begin{bincollection}[auto:STB|2014/06/10|07:15:57]
\bauthor{\bsnm{Lindsten},~\bfnm{F.}\binits{F.}} \AND
\bauthor{\bsnm{Sch{\"o}n},~\bfnm{T.~B.}\binits{T.B.}}
(\byear{2012}).
\btitle{On the use of backward simulation in the particle Gibbs sampler}.
In \bbooktitle{Proceedings of the 37th IEEE International Conference on
Acoustics, Speech, and Signal Processing (ICASSP)}
\bpages{3845--3848}.
\blocation{Kyoto}:
\bpublisher{IEEE}.
\end{bincollection}
%
\bptok{imsref}%
\endbibitem

%b17 #&#
\bibitem{Lin1992}
%
\begin{bbook}[mr]
\bauthor{\bsnm{Lindvall},~\bfnm{Torgny}\binits{T.}}
(\byear{1992}).
\btitle{Lectures on the Coupling Method}.
\bseries{Wiley Series in Probability and Mathematical Statistics:
Probability and Mathematical Statistics}.
\blocation{New York}:
\bpublisher{Wiley}.
%\bnote{A Wiley-Interscience Publication}.
\bid{mr={1180522}}
\end{bbook}
%
\bptok{imsref}%
% NOT OUTPUTED:
% isbn = 0-471-54025-0
% fpage = xiv+257
\endbibitem

%b18 #&#
\bibitem{LiuChen}
%
\begin{barticle}[mr]
\bauthor{\bsnm{Liu},~\bfnm{Jun~S.}\binits{J.S.}} \AND
\bauthor{\bsnm{Chen},~\bfnm{Rong}\binits{R.}}
(\byear{1998}).
\btitle{Sequential {M}onte {C}arlo methods for dynamic systems}.
\bjournal{J. Amer. Statist. Assoc.}
\bvolume{93}
\bpages{1032--1044}.
\bid{doi={10.2307/2669847}, issn={0162-1459}, mr={1649198}}
\end{barticle}
%
\bptok{imsref}%
% NOT OUTPUTED:
% issn = 0162-1459
% url = http://dx.doi.org/10.2307/2669847
% number = 443
% coden = JSTNAL
% fjournal = Journal of the American Statistical Association
\endbibitem

%b19 #&#
\bibitem{MiraGeyer1999}
%
\begin{bmisc}[auto:STB|2014/06/10|07:15:57]
\bauthor{\bsnm{Mira},~\bfnm{A.}\binits{A.}} \AND
\bauthor{\bsnm{Geyer},~\bfnm{C.}\binits{C.}}
(\byear{1999}).
\bhowpublished{Ordering Monte Carlo Markov chains.
Technical report, School of Statistics, Univ. Minnesota}.
\end{bmisc}
%
\bptok{imsref}%
\endbibitem

%b20 #&#
\bibitem{Peters2010a}
%
\begin{bmisc}[auto:STB|2014/06/10|07:15:57]
\bauthor{\bsnm{Peters},~\bfnm{G.}\binits{G.}},
\bauthor{\bsnm{Hosack},~\bfnm{G.}\binits{G.}} \AND
\bauthor{\bsnm{Hayes},~\bfnm{K.}\binits{K.}}
(\byear{2010}).
\bhowpublished{{Ecological non-linear state space model selection via
adaptive particle Markov chain Monte Carlo}.
Available at \arxivurl{arXiv:1005.2238}.}
\end{bmisc}
%
\bptok{imsref}%
% NOT OUTPUTED:
% sortkey = Peters(2010
\endbibitem

%b21 #&#
\bibitem{Pitt2012134}
%
\begin{barticle}[mr]
\bauthor{\bsnm{Pitt},~\bfnm{Michael~K.}\binits{M.K.}},
\bauthor{\bsnm{Silva},~\bfnm{Ralph~dos~Santos}\binits{R.d.S.}},
\bauthor{\bsnm{Giordani},~\bfnm{Paolo}\binits{P.}} \AND
\bauthor{\bsnm{Kohn},~\bfnm{Robert}\binits{R.}}
(\byear{2012}).
\btitle{On some properties of {M}arkov chain {M}onte {C}arlo simulation
methods based on the particle filter}.
\bjournal{J. Econometrics}
\bvolume{171}
\bpages{134--151}.
\bid{doi={10.1016/j.jeconom.2012.06.004}, issn={0304-4076}, mr={2991856}}
\end{barticle}
%
\bptok{imsref}%
% NOT OUTPUTED:
% issn = 0304-4076
% url = http://dx.doi.org/10.1016/j.jeconom.2012.06.004
% number = 2
% coden = JECMB6
% fjournal = Journal of Econometrics
\endbibitem

%b22 #&#
\bibitem{roberts2004general}
%
\begin{barticle}[mr]
\bauthor{\bsnm{Roberts},~\bfnm{Gareth~O.}\binits{G.O.}} \AND
\bauthor{\bsnm{Rosenthal},~\bfnm{Jeffrey~S.}\binits{J.S.}}
(\byear{2004}).
\btitle{General state space {M}arkov chains and {MCMC} algorithms}.
\bjournal{Probab. Surv.}
\bvolume{1}
\bpages{20--71}.
\bid{doi={10.1214/154957804100000024}, issn={1549-5787}, mr={2095565}}
\end{barticle}
%
\bptok{imsref}%
% NOT OUTPUTED:
% issn = 1549-5787
% url = http://dx.doi.org/10.1214/154957804100000024
% fjournal = Probability Surveys
\endbibitem

%b23 #&#
\bibitem{Silva2009}
%
\begin{bmisc}[auto:STB|2014/06/10|07:15:57]
\bauthor{\bsnm{Silva},~\bfnm{R.}\binits{R.}},
\bauthor{\bsnm{Giordani},~\bfnm{P.}\binits{P.}},
\bauthor{\bsnm{Kohn},~\bfnm{R.}\binits{R.}} \AND
\bauthor{\bsnm{Pitt},~\bfnm{M.}\binits{M.}}
(\byear{2009}).
\bhowpublished{Particle filtering within adaptive Metropolis--Hastings
sampling.
Preprint. Available at \arxivurl{arXiv:0911.0230}.}
\end{bmisc}
%
\bptok{imsref}%
% NOT OUTPUTED:
% sortkey = Silva(2009
\endbibitem

%b24 #&#
\bibitem{tierney1998note}
%
\begin{barticle}[mr]
\bauthor{\bsnm{Tierney},~\bfnm{Luke}\binits{L.}}
(\byear{1998}).
\btitle{A note on {M}etropolis--{H}astings kernels for general state spaces}.
\bjournal{Ann. Appl. Probab.}
\bvolume{8}
\bpages{1--9}.
\bid{doi={10.1214/aoap/1027961031}, issn={1050-5164}, mr={1620401}}
\end{barticle}
%
\bptok{imsref}%
% NOT OUTPUTED:
% issn = 1050-5164
% url = http://dx.doi.org/10.1214/aoap/1027961031
% number = 1
% fjournal = The Annals of Applied Probability
\endbibitem

%b25 #&#
\bibitem{Vrugt2012}
%
\begin{bmisc}[auto:STB|2014/06/10|07:15:57]
\bauthor{\bsnm{Vrugt},~\bfnm{J.~A.}\binits{J.A.}},
\bauthor{\bsnm{ter Braak},~\bfnm{C.~J.}\binits{C.J.}},
\bauthor{\bsnm{Diks},~\bfnm{C.~G.}\binits{C.G.}} \AND
\bauthor{\bsnm{Schoups},~\bfnm{G.}\binits{G.}}
(\byear{2014}).
\bhowpublished{Hydrologic data assimilation using particle Markov chain
Monte Carlo simulation: Theory, concepts and applications.
\emph{Advances in Water Resources}. To appear}.
\end{bmisc}
%
\bptok{imsref}%
% NOT OUTPUTED:
% sortkey = Vrugt(2012
\endbibitem

%b26 #&#
\bibitem{Whiteley_disc_PMCMC}
%
\begin{barticle}[auto]
\bauthor{\bsnm{Whiteley},~\bfnm{N.}\binits{N.}}
(\byear{2010}).
\btitle{Discussion of ``{P}article {M}arkov chain {M}onte {C}arlo
methods'' by
{A}ndrieu et al.}
\bjournal{J. R. Stat. Soc. Ser. B Stat. Methodol.}
\bvolume{72}
\bpages{306--307}.
\end{barticle}
%
\bptok{imsref}%
% NOT OUTPUTED:
\endbibitem

%b27 #&#
\bibitem{WhiteleyChange2010}
%
\begin{bmisc}[auto:STB|2014/06/10|07:15:57]
\bauthor{\bsnm{Whiteley},~\bfnm{N.}\binits{N.}},
\bauthor{\bsnm{Andrieu},~\bfnm{C.}\binits{C.}} \AND
\bauthor{\bsnm{Doucet},~\bfnm{A.}\binits{A.}}
(\byear{2010}).
\bhowpublished{Efficient Bayesian inference for switching state-space
models using discrete particle Markov chain Monte Carlo methods.
Available at \arxivurl{arXiv:1011.2437}.}
\end{bmisc}
%
\bptok{imsref}%
% NOT OUTPUTED:
% sortkey = Whiteley(2010
\endbibitem\vadjust{\goodbreak}

%b28 #&#
\bibitem{yu2011center}
%
\begin{barticle}[mr]
\bauthor{\bsnm{Yu},~\bfnm{Yaming}\binits{Y.}} \AND
\bauthor{\bsnm{Meng},~\bfnm{Xiao-Li}\binits{X.-L.}}
(\byear{2011}).
\btitle{To center or not to center: That is not the question -- An
ancillarity-sufficiency interweaving strategy ({ASIS}) for boosting
{MCMC} efficiency}.
\bjournal{J. Comput. Graph. Statist.}
\bvolume{20}
\bpages{531--570}.
\bid{doi={10.1198/jcgs.2011.203main}, issn={1061-8600}, mr={2878987}}
\end{barticle}
%
\bptok{imsref}%
% NOT OUTPUTED:
% issn = 1061-8600
% url = http://dx.doi.org/10.1198/jcgs.2011.203main
% number = 3
% fjournal = Journal of Computational and Graphical Statistics
\endbibitem
\end{thebibliography}

% imsref loaded by akundreckaite, 2014-06-12 10:01:59
% imsref loaded by akundreckaite, 2014-06-12 10:32:57
%

%\begin{appendix}
%\section{}
%\end{appendix}

% zodis "Acknowledgments" paliekamas pagal autoriu
%\section*{Acknowledgements}

%\begin{supplement}%[id=suppA]
%\sname{Supplement A}
%\stitle{}
%\slink[doi]{10.3150/00-BEJXXXXSUPP} %[doi,text={...}] - jei reikia
%suskaldyti doi
%\sdatatype{.pdf}
%\sfilename{BEJ000\_supp.pdf}
%\sdescription{}
%\end{supplement}

%\begin{thebibliography}{00}
%\bibitem[\protect\citeauthoryear{}{()}]{r1}
%\bibitem{r1}
%\end{thebibliography}

\printhistory
\end{document}